\newcommand{\leo}{}
  \newcommand{\pn}{{N}} 
  \newcommand{\bn}{\mathcal{B}}
  \newcommand{\alt}{\textsc{alt}}
\newcommand{\lsa}{\textsc{lsa}} 
\newcommand{\cluster}{\mathcal{C}} 
\newcommand{\ud}[1]{\underline{#1}}  
\newcommand{\po}{\preceq}
\newcommand{\parent}{p}
\newcommand{\rev}{}
\begin{document}

\title{Binets: fundamental building blocks for phylogenetic networks\thanks{Part of this work was conducted while Vincent Moulton was visiting Leo van Iersel on a visitors grant funded by the Netherlands Organization for Scientific Research (NWO). Leo van Iersel was partially supported by NWO, including Vidi grant 639.072.602, and partially by the 4TU Applied Mathematics Institute. \rev{We thank the editor and the two anonymous referees for their constructive comments.}
}}



\author{Leo van Iersel \and Vincent Moulton \and Eveline de Swart \and Taoyang Wu}

\institute{Leo van Iersel \at
Delft Institute of Applied Mathematics\\
Delft University of Technology\\
The Netherlands\\
\email{\url{l.j.j.v.iersel@gmail.com}}
\and Vincent Moulton and Taoyang Wu\at
 School of Computing Sciences\\
 University of East Anglia\\
 Norwich\\
 United Kingdom\\
 \email{\url{v.moulton@uea.ac.uk}}
 \and Eveline de Swart\at
  Delft Institute of Applied Mathematics\\
  Delft University of Technology\\
  The Netherlands\\
  \email{\url{Eveline_de_Swart@hotmail.com}}
  \and Taoyang Wu\at
  School of Computing Sciences\\
 University of East Anglia\\
 Norwich\\
 United Kingdom\\
 \email{\url{Taoyang.Wu@uea.ac.uk}}
}


\maketitle

\begin{abstract}
Phylogenetic networks are a generalization of evolutionary trees that are used by biologists to represent the evolution of organisms which have undergone reticulate evolution. Essentially, a phylogenetic network is a  directed acyclic graph having a unique root in which the leaves are labelled by a given set of species. Recently, some approaches have been developed to construct phylogenetic networks from collections of networks on 2- and 3-leaved networks, which are known as binets and trinets, respectively. Here we study in more depth properties of collections of binets, one of the simplest possible types of networks into which a phylogenetic network can be decomposed.  More specifically, we show that if a collection of level-1 binets is compatible with some binary network, then it is also compatible with a binary level-1 network. Our proofs are based on useful structural results concerning lowest stable ancestors in networks. In addition, we show that, although the binets do not determine the topology of the network, they do determine the number of reticulations in the network, which is one of its most important parameters. We also consider algorithmic questions concerning binets. We show that deciding whether an arbitrary set of binets is compatible with some network is at least as hard as the well-known Graph Isomorphism problem. However, if we restrict to level-1 binets, it is possible to decide in polynomial time whether there exists a binary network that displays all the binets. We also show that to find a network that displays a maximum number of the binets is NP-hard, but that there exists a simple polynomial-time 1/3-approximation algorithm for this problem. It is hoped that these results will eventually assist in the development of new methods for constructing phylogenetic networks from collections of smaller networks.
\keywords{reticulate evolution\and phylogenetic network\and subnetwork\and binet\and algorithm}
\end{abstract}

\section{Introduction}

Phylogenetic networks are a generalization of evolutionary trees which biologists use to represent the evolution of species that have undergone reticulate evolution. Such networks are essentially directed acyclic graphs having a unique root in which the leaves are labelled by a set $X$ of species~\cite{hrs}. In contrast to evolutionary trees, which can only represent speciation events, phylogenetic networks permit the representation of evolutionary events such as gene transfer and hybridization which are known to occur in organisms such as bacteria and plants, respectively. Although theoretical properties of evolutionary trees have been studied since at least the 1970's, phylogenetic networks have been considered from this perspective only more recently, especially the rooted variants which we will focus on in this paper. 

One of the most important open questions concerning phylogenetic networks is how to construct them for biological datasets~\cite{bapteste2013networks}. It is now common practice for biologists to construct evolutionary trees from molecular data, and several computer programs are available for this purpose~\cite{felsenstein2004inferring}. However, the problem of constructing networks from such data is an active area of research, and there are only a limited number of programs available for biologists to perform this task. A survey of some of these methods and the theory underpinning phylogenetic networks may be found in~\cite{gus14,hrs,M11}. 

One approach that has been recently developed for constructing phylogenetic networks involves building them up from smaller networks, using what can be thought of as a divide-and-conquer approach~\cite{oldman2016trilonet}. In particular, for a set $X$ of species, a network is constructed for every subset of $X$ size 3 (called a {\em trinet}), and then the trinets are puzzled together to build a network (see Figure~\ref{fig:samebinets} for an example of a trinet). This approach constructs and is based on {\em level-1} networks, networks that are slightly more general than evolutionary trees (see Section~\ref{sec:prelim} for the definition of such networks). 

At first sight, it might appear that trinets are the simplest possible networks that could be considered for building up networks from smaller ones. However, trinets contain even simpler networks called {\em binets}, networks with 2 leaves (see e.g. Figure~\ref{fig:samebinets} for a level-1 trinet and the binets that it displays). Note that whereas binets are the smallest informative building blocks for phylogenetic networks, for rooted phylogenetic trees, these are 3-leaf trees (see e.g.~\cite{byrka2010new}). Interestingly, even though binets are in themselves very simple, the collection of binets displayed by a network can still contain some useful information concerning the network. Indeed, in the aforementioned approach for building level-1 networks from trinets, binets are used in the process of puzzling together the trinets.

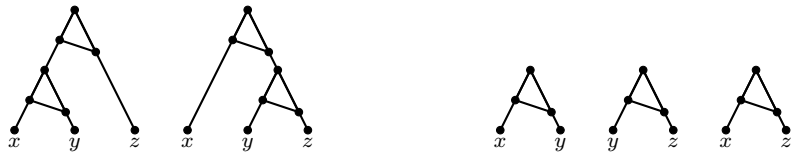
\begin{figure}
	\centering
	\begin{tikzpicture}
	\draw[thick] (0,0) -- (0.4,0.8) -- (0.8,0);
	\draw[thick] (0.4,0.8) -- (0.8,1.6) -- (1.6,0);
	\draw[thick] (0.2,0.4) -- (0.68,0.24);
	\draw[thick] (0.6,1.2) -- (1.08,1.04);
	\draw[thick] (0.2,0.4) -- (0.4,0.8) -- (0.68,0.24);
	\draw[thick] (0.6,1.2) -- (0.8,1.6) -- (1.08,1.04);
	
	\draw [fill] (0.4,0.8) circle (0.05);
	\draw [fill] (0.8,1.6) circle (0.05);
	\draw [fill] (0.2,0.4) circle (0.05);
	\draw [fill] (0.68,0.24) circle (0.05);
	\draw [fill] (0.6,1.2) circle (0.05);
	\draw [fill] (1.08,1.04) circle (0.05);
	\draw [fill] (0,0) circle (0.05) node [below] {$x$};
	\draw [fill] (0.8,0) circle (0.05) node [below] {$y$};
	\draw [fill] (1.6,0) circle (0.05) node [below] {$z$};
	
	\draw[thick] (2.3,0) -- (3.1,1.6) -- (3.5,0.8);
	\draw[thick] (3.1,0) -- (3.5,0.8) -- (3.9,0);
	\draw[thick] (2.9,1.2) -- (3.38,1.04);
	\draw[thick] (3.3,0.4) -- (3.78,0.24);
	\draw[thick] (2.9,1.2) -- (3.1,1.6) -- (3.38,1.04);
	\draw[thick] (3.3,0.4) -- (3.5,0.8) -- (3.78,0.24);
	
	\draw [fill] (3.1,1.6) circle (0.05);
	\draw [fill] (3.5,0.8) circle (0.05);
	\draw [fill] (2.9,1.2) circle (0.05);
	\draw [fill] (3.38,1.04) circle (0.05);
	\draw [fill] (3.3,0.4) circle (0.05);
	\draw [fill] (3.78,0.24) circle (0.05);
	\draw [fill] (2.3,0) circle (0.05) node [below] {$x$};
	\draw [fill] (3.1,0) circle (0.05) node [below] {$y$};
	\draw [fill] (3.9,0) circle (0.05) node [below] {$z$};
	\end{tikzpicture} \hspace{2cm} \begin{tikzpicture}
	\draw[thick] (0,0) -- (0.4,0.8) -- (0.8,0);
	\draw[thick] (0.2,0.4) -- (0.68,0.24);
	\draw[thick] (0.2,0.4) -- (0.4,0.8) -- (0.68,0.24);
	\draw [fill] (0.4,0.8) circle (0.05);
	\draw [fill] (0.68,0.24) circle (0.05);
	\draw [fill] (0.2,0.4) circle (0.05);
	\draw [fill] (0,0) circle (0.05) node [below] {$x$};
	\draw [fill] (0.8,0) circle (0.05) node [below] {$y$};
	\draw[thick] (1.5,0) -- (1.9,0.8) -- (2.3,0);
	\draw[thick] (1.7,0.4) -- (2.18,0.24);
	\draw[thick] (1.7,0.4) -- (1.9,0.8) -- (2.18,0.24);
	\draw [fill] (1.9,0.8) circle (0.05);
	\draw [fill] (2.18,0.24) circle (0.05);
	\draw [fill] (1.7,0.4) circle (0.05);
	\draw [fill] (1.5,0) circle (0.05) node [below] {$y$};
	\draw [fill] (2.3,0) circle (0.05) node [below] {$z$};
	\draw[thick] (3,0) -- (3.4,0.8) -- (3.8,0);
	\draw[thick] (3.2,0.4) -- (3.68,0.24);
	\draw[thick] (3.2,0.4) -- (3.4,0.8) -- (3.68,0.24);
	\draw [fill] (3.4,0.8) circle (0.05);
	\draw [fill] (3.68,0.24) circle (0.05);
	\draw [fill] (3.2,0.4) circle (0.05);
	\draw [fill] (3,0) circle (0.05) node [below] {$x$};
	\draw [fill] (3.8,0) circle (0.05) node [below] {$z$};
	\end{tikzpicture}
	\caption{\label{fig:samebinets}An example of two level-1 \rev{trinets} (left) that display the same set of three binets (right). All arcs are directed downwards. }
\end{figure}

In light of these considerations some obvious questions immediately arise concerning binets. For example, when is a collection of binets displayed by some phylogenetic network (the compatibility problem), and how much information might we expect to extract concerning a phylogenetic network by just looking at the collection of binets that it displays? In this paper, we shall address these and related algorithmic questions concerning binets. It is hoped that these results will be useful in future for developing improved methods for constructing phylogenetic networks from smaller networks.

We now present a summary of the rest of the paper. After introducing some preliminaries concerning phylogenetic networks in the next section, we derive a key structural result for networks (Corollary~\ref{cor:simple:lowest}) which is useful in identifying which of the two possible types of binet is displayed on two leaves within a binary phylogenetic network (that is a network in which all internal vertices have degree 3). Using this theorem, in Section~\ref{sec:binets} we show that the collection of level-1 binets displayed by {\em any} binary phylogenetic network can always be displayed by some binary level-1 network (Theorem~\ref{thm:binary}). This reduces the problem of understanding binets displayed by arbitrary binary networks to level-1 networks. To prove this result, we develop a framework which also implies that there is a polynomial-time algorithm in $|X|$ for deciding whether or not a collection of level-1 binets with combined leaf-set $X$ can be displayed by some network with leaf-set $X$, and, if it is, gives a level-1 network that does this (see Section~\ref{sec:complexity}). Note that this is related to an algorithm presented in~\cite{himsw}. 

In Section~\ref{sec:retic}, we turn to the question as to what can be deduced about the features of a phylogenetic network just by considering the collection of binets that it displays. Note that, as might be expected, there are networks - even trinets - that display the same set of binets but that are not equivalent. For example, the two trinets in Figure~\ref{fig:samebinets} both display the same set of binets, but they are not equivalent. Even so, we will show in Theorem~\ref{thm:retic} that if two level-1 networks both display exactly the same collection of binets, then they must have the same number of reticulation vertices (indegree-2 vertices). Note that the number of such vertices corresponds to the number of reticulate evolutionary events, such as hybridization, that took place in the evolutionary history of the species labelling the leaves of the network. Consequently, the binets displayed by a network can at least capture a useful course-grained feature of the network in question. 

In Sections~\ref{sec:complexity} and~\ref{sec:max}, we consider some algorithmic questions concerning binets. As we have mentioned above, it can be decided in polynomial time in $|X|$ as to when a collection of binets with combined leaf-set $X$ is displayed by some level-1 network on~$X$. However, we show that if we consider arbitrary binets (i.e. not necessarily binary or level-1) then this decision problem becomes at least as hard as the graph-isomorphism problem (see Theorem~\ref{thm:gi}), one of the most famous problems whose complexity is still unknown. In addition, in Section~\ref{sec:max} we consider a related problem which, for a given collection of binary level-1 binets, asks for a network which displays the maximum number of binets in this collection. This is closely related to the maximum rooted triplet consistency problem for evolutionary trees~\cite{byrka2010new}. We show that the binet problem is NP-complete (Theorem~\ref{thm:kBC:hard}), by giving a reduction from the feedback-arc set problem. However, we also show that the problem is 1/3-approximable. In fact, given any collection of binary level-1 binets we can always find some network that displays at least 1/3 of the binets (see Theorem~\ref{thm:questionmark}). We conclude in Section~\ref{sec:discussion} with discussion of some possible future research directions, \rev{and a brief discussion of a potential application of our results.}

\section{Preliminaries}\label{sec:prelim}

Throughout this paper,  $X$ is a non-empty finite set (which usually represents a set of species or organisms). 

\subsection{Digraphs}

A {\em directed graph},  or {\em digraph} for short, $G=(V,E)$ consists of a \rev{finite} set $V=V(G)$ of {\em vertices} and a set $E=E(G)$ of {\em arcs}, where each arc is an ordered pair $(u,v)$ of vertices in $V$ in which $u$ is said to be a {\em parent} of $v$, denoted by $u=\parent(v)$, and $v$ a {\em child} of $u$. All digraphs studied here contain no loops, that is, vertices that are children of themselves. The {\em in-degree} of vertex $u$ is the number of vertices $v$ in $V$ such that $(v,u)$ is an arc, and the {\em out-degree} of $u$ is the number of vertices $w$ with $(u,w)$ being an arc. A \rev{\emph{root}} is a vertex with  in-degree 0.  A {\em leaf} is a vertex of out-degree 0 and the set of leaves is denoted by $L(G)$. 
Any vertex in $G$ that is neither a root nor a leaf is referred to as an {\em interior vertex}. In addition, an interior vertex is a {\em tree vertex} if it has in-degree 1, and a {\em reticulation vertex} if it has in-degree greater than 1. 

 A {\em directed path} or \emph{dipath} in a digraph  is a sequence $u_0,u_1,\ldots,u_k$ ($k\geq 1$) of vertices such that $(u_{i-1},u_i)$ is an arc for $1\leq i \leq k$. An {\em acyclic digraph}  is a digraph that does not contain any directed path starting and ending at the same vertex. If an acyclic digraph $G$ contains a unique root, which is usually designated by $\rho=\rho(G)$, then it will be referred to as a \emph{rooted acyclic digraph}. 
 
 An acyclic digraph $G$ induces a canonical partial order $\prec_{G}$ on its vertex set $V$, that is,  $v\prec_{G} u$ if  there exists a directed path from $u$ to $v$. In this case, we shall say that  $v$ is \emph{below} $u$. \leo{When} the digraph $G$ is clear from the context, $\prec_{G}$ will be written as $\prec$. In addition, we write $v \po u$ if $u=v$ or $u\prec v$. \leo{Given a subset~$U$ of the vertex set of an acyclic digraph, we say that~$u\in U$ is a \emph{lowest} vertex in~$U$ if there is no~$v\in U$ with~$v\prec u$.}

Let $\ud{G}$ be the undirected graph obtained from digraph $G$ by ignoring the direction of the arcs in $G$. Then $G$ is {\em connected} if $\ud{G}$ is connected, that is, there exists an undirected path between every pair of distinct vertices in $\ud{G}$. Note that a rooted acyclic digraph is necessarily connected \rev{(since each connected component of an acyclic digraph has at least one root)}. A {\em cut vertex} is a vertex of $G$ whose removal  disconnects $\ud{G}$. Similarly, a {\em cut arc} is an arc of $G$ whose removal disconnects $\ud{G}$. 
A directed graph is {\em biconnected} if it contains no cut vertex, and a {\em biconnected component} of $G$ is a maximal biconnected subgraph, which is called {\em trivial} if it contains precisely one arc (which is necessarily a cut arc), and {\em non-trivial} otherwise.

\subsection{Phylogenetic networks}

A \emph{phylogenetic network} $\pn$ on $X$ is a rooted acyclic digraph whose leaves are bijectively labeled by the elements in $X$ and which does not contain any vertex with in-degree one and out-degree one. For simplicity, we will just write $L(N)=X$ in case there is no confusion about the labeling. To simplify the argument, throughout this paper we will also assume that all leaves in a phylogenetic network have in-degree one.  
In addition, a phylogenetic network is {\em binary} if each tree vertex, as well as the root, has out-degree 2, and each reticulation vertex has in-degree 2 and out-degree 1. Finally, we say a binary phylogenetic network is \emph{level}-$k$ ($k\geq 0$)  if each of its biconnected components contains at most $k$ reticulation vertices. To some extent, the concept of the level of a phylogenetic network can be regarded as a measure of its `distance' to being a phylogenetic tree. In particular, a binary phylogenetic network is a phylogenetic tree if and only if it is level-0. A phylogenetic network is called {\em simple} if it contains precisely one non-trivial biconnected component $H$ and no \rev{cut arcs} other than the ones leaving $H$.

Two networks $\pn_1=(V_1,E_1)$ and $\pn_2=(V_2,E_2)$ on $X$ are said to be {\em isomorphic} if there exists a bijection $f: V_1\to V_2$ such that  $f(x)=x$ for all $x\in X$, and $(u,v)$ is an arc in $\pn_1$ if and only if $(f(u),f(v))$ is an arc in~$\pn_2$. 

Finally, the \rev{\emph{cluster}} of a vertex $u$, denoted by $\cluster_N(u)=\cluster(u)$, is defined as the subset of~$X$ consisting of the leaves below $u$. Here we will use the convention that $\cluster(u)=\{u\}$ if $u$ is a leaf.

\subsection{Stable ancestors and binets}

Given a phylogenetic network $\pn$ on $X$ and a subset $U\subseteq V(N)$, a {\em stable ancestor} of $U$ in~$N$ is a vertex \rev{$v$} in $V(N)\setminus U$ \rev{such that every path in~$N$ from the root to a vertex in $U$ contains~$v$.} Note that for two stable ancestor\rev{s} $u$ and $u'$ of $U$, we have either $u\po v$ or $v\po u$. Therefore, there exists a unique \leo{lowest vertex in the set of stable ancestors of~$U$}, which will be referred to as the {\em lowest stable ancestor} of $U$ in $N$ and denoted by $\lsa_N(U)=\lsa(U)$. Note that for a subset $Y$ of $X$ with $|Y|\geq 2$, there exist two elements $x$ and $y$ in $Y$ such that $\lsa(Y)=\lsa(\{x,y\})$.
For simplicity, we also write $\lsa(\{x,y\})$ as $\lsa(x,y)$.

The following property of lowest stable ancestors will be useful.

\begin{lemma}
\label{lem:lsa:two:nodes}
Suppose that  $u$ and $v$ are two vertices in a phylogenetic network such that $u\prec v\prec \lsa(u)$, then we have $\lsa(v)\preceq \lsa(u)$.
\end{lemma}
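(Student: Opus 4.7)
The plan is to show that $\lsa(u)$ itself qualifies as a stable ancestor of $v$; since $\lsa(v)$ is the lowest such ancestor, this immediately yields $\lsa(v) \preceq \lsa(u)$. First I would note that $\lsa(u) \neq v$ because $v \prec \lsa(u)$ by hypothesis, so $\lsa(u) \in V(N) \setminus \{v\}$, which is the first requirement for being a stable ancestor of $\{v\}$.

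The substance of the argument is therefore to verify that every directed path $P$ from the root $\rho$ to $v$ passes through $\lsa(u)$. To do this, I would fix any such $P$ and, using $u \prec v$, pick a directed path $Q$ from $v$ to $u$. Concatenating $P$ and $Q$ produces a walk from $\rho$ to $u$; because $N$ is acyclic, this walk is in fact a directed path (any repeated vertex would force a directed cycle, since a vertex appearing in both $P$ and $Q$ other than $v$ would give a cycle through that vertex). By definition of stable ancestor, this concatenated path must contain $\lsa(u)$.

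It then remains to show that $\lsa(u)$ lies on $P$ rather than on $Q$. The key observation is that any vertex on $Q$ satisfies $w \preceq v$ (since $Q$ runs from $v$ to $u$), so if $\lsa(u)$ lay on $Q$ we would have $\lsa(u) \preceq v$, contradicting the assumption $v \prec \lsa(u)$ (using antisymmetry of $\preceq$, which follows from acyclicity). Therefore $\lsa(u) \in P$, and since $P$ was arbitrary, $\lsa(u)$ is a stable ancestor of $v$, giving $\lsa(v) \preceq \lsa(u)$ as required.

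I expect the main (modest) obstacle to be the careful handling of the concatenation step: one must justify that walks in an acyclic digraph are genuine paths, so that the defining property of $\lsa(u)$ — which is stated in terms of directed paths from the root — can be invoked on $P \cup Q$. Once this is stated cleanly, the remainder of the argument is a short application of the partial-order structure induced by $N$.
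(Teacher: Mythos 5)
Your proof is correct and rests on the same key step as the paper's: extending an arbitrary root-to-$v$ path by a directed path from $v$ to $u$ (all of whose vertices lie below $v\prec\lsa(u)$, so none can equal $\lsa(u)$) to force $\lsa(u)$ onto the root-to-$v$ portion. The only difference is organisational: you argue directly that $\lsa(u)$ is a stable ancestor of $v$ and then invoke minimality of $\lsa(v)$ (justified by the comparability of stable ancestors noted in the preliminaries), whereas the paper first places $\lsa(u)$ and $\lsa(v)$ on a common root-to-$u$ path and then rules out $\lsa(u)\prec\lsa(v)$ by contradiction.
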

\begin{proof}
Since $u\prec v$, we know that there exists a dipath $P$ from $\rho$ to $u$ that contains~$v$. By the definition of lowest stable ancestor, we know that $\lsa(u)$ and $\lsa(v)$ are contained in $P$. Hence, either $\lsa(v)\preceq \lsa(u)$ or $\lsa(u)\prec \lsa(v)$. If $\lsa(u)\prec \lsa(v)$, then we have $v \prec \lsa(u)\prec \lsa(v)$. Then there exists a dipath $P'$ from $\rho$ to $v$ that does not contain $\lsa(u)$ (otherwise $\lsa(u)$ would be a \rev{stable} ancestor of~$v$ that is below $\lsa(v)$). Using that $u\prec v\prec \lsa(u)$, it follows that there exists a dipath from $\rho$ to $u$ that does not contain $\lsa(u)$, a contradiction. Therefore, $\lsa(v)\preceq \lsa(u)$.
\qed\end{proof}


For~$Y\subseteq X$, the subnet of $\pn$ on $Y$,
 denoted by $\pn|_Y$, is defined as the subgraph obtained from $\pn$ by deleting all vertices that are not on any path from $\lsa(Y)$ to elements in $Y$ and subsequently suppressing all in-degree 1 and out-degree 1 vertices and parallel arcs until no such vertices or arcs exist. A network~$N'$ is said to be \emph{displayed} by network~$N$ if~$N'=N|_Y$ for some~$Y\subseteq X$.
 
Note that, by definition, $\pn|_X=\pn$ if and only if $\lsa(X)=\rho(\pn)$. In this case, $\pn$ is referred to as a  {\em recoverable} network. Note that  every subnet of $\pn$ is necessarily recoverable. \rev{Moreover, a collection of subnets is displayed by some network if and only if it is displayed by some recoverable network. Therefore, we assume all networks in this paper to be recoverable}. 

A {\em binet}  is a phylogenetic network with precisely two leaves, while a {\em trinet}  is a phylogenetic network with precisely three leaves. 
Let
$$\bn(N)=\{N|_Y\,:\,Y\subseteq X~~\mbox{and}~~|Y|=2\}$$ 
be the collection of binets displayed by $N$. Note that there are precisely three binary level-1 binets on a set $\{x,y\}$, and they can be grouped into two types: the ``tree type'', $T(x,y)$, and the ``reticulate type'' $R(x;y)$ and $R(y;x)$  (see Figure~\ref{fig:allBinets}). A collection of binets~$\bn$ \emph{on $X$} is a collection of binets such that the union of the leaf-sets of the binets is equal to~$X$.

\begin{figure}
\centering
 \begin{tikzpicture}
	\draw[thick] (0,0) -- (0.4,0.8) -- (0.8,0);
	\draw [fill] (0.4,0.8) circle (0.05);
	\draw [fill] (0,0) circle (0.05) node [below] {$x$};
	\draw [fill] (0.8,0) circle (0.05) node [below] {$y$};
	\draw (0.4,-1) node {$T(x,y)$};
	\begin{scope}[xshift=2cm,yshift=0cm]
	\draw[thick] (0,0) -- (0.4,0.8) -- (0.8,0);
	\draw[thick] (0.2,0.4) -- (0.68,0.24);
	\draw[thick] (0.2,0.4) -- (0.4,0.8) -- (0.68,0.24);
	\draw [fill] (0.4,0.8) circle (0.05);
	\draw [fill] (0.68,0.24) circle (0.05);
	\draw [fill] (0.2,0.4) circle (0.05);
	\draw [fill] (0,0) circle (0.05) node [below] {$x$};
	\draw [fill] (0.8,0) circle (0.05) node [below] {$y$};
	\draw (0.4,-1) node {$R(x;y)$};
	\end{scope}
	\begin{scope}[xshift=4cm,yshift=0cm]
	\draw[thick] (0,0) -- (0.4,0.8) -- (0.8,0);
	\draw[thick] (0.2,0.4) -- (0.68,0.24);
	\draw[thick] (0.2,0.4) -- (0.4,0.8) -- (0.68,0.24);
	\draw [fill] (0.4,0.8) circle (0.05);
	\draw [fill] (0.68,0.24) circle (0.05);
	\draw [fill] (0.2,0.4) circle (0.05);
	\draw [fill] (0,0) circle (0.05) node [below] {$y$};
	\draw [fill] (0.8,0) circle (0.05) node [below] {$x$};
	\draw (0.4,-1) node {$R(y;x)$};
	\end{scope}
	\begin{scope}[xshift=8cm,yshift=0cm]
	\draw[thick] (0,0) -- (0.4,0.8) -- (0.8,0);
	\draw[thick] (0.2,0.4) -- (0.68,0.24);
	\draw[thick] (0.2,0.4) -- (0.4,0.8) -- (0.68,0.24);
	\draw [fill] (0.4,0.8) circle (0.05);
	\draw [fill] (0.68,0.24) circle (0.05);
	\draw [fill] (0.2,0.4) circle (0.05);
	\draw [fill] (-0.2,-0.4) circle (0.05);
	\draw [fill] (0.8,0) circle (0.05);
	\draw [fill] (1.12,-0.64) circle (0.05) node [below] {$x$};
	\draw [thick] (0.8,0) -- (1.12,-0.64);
	\draw [thick] (0,0) -- (-0.32,-0.64);
	\draw [thick] (0.8,0) -- (-0.2,-0.4);
	\draw [fill] (-0.32,-0.64) circle (0.05) node [below] {$y$};
	\draw (0.4,-1.5) node {\rev{level-2}};
	\end{scope}
	\end{tikzpicture}
\caption{The three binary level-1 binets on $\{x,y\}$ \rev{and an example of a level-2 binet}.}
\label{fig:allBinets}
\end{figure}
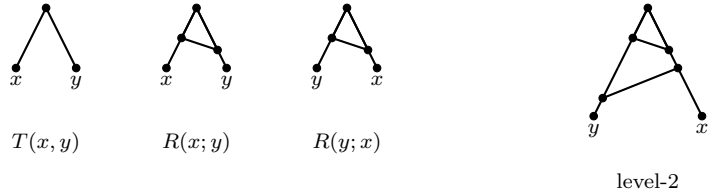

\section{A structure theorem}

In this section we present a key result (Corollary~\ref{cor:simple:lowest}) concerning the structure of the non-trivial biconnected component of a simple network. Note that a similar result has been obtained for a special collection of (non-binary) phylogenetic networks in~\cite{hmw16}.

Let $G$ be a directed acyclic graph and let $P=v_0,v_1,\dots,v_t$ be an undirected path in the underlying undirected graph~$\ud{G}$, then a vertex $v_i$ (with $1\leq i\leq t-1$) is called {\em alternating} (with respect to $P$) if we have either 
$\{(v_{i-1},v_i),(v_{i+1},v_i)\}\subseteq E(G)$ or $\{(v_{i},v_{i-1}),(v_{i},v_{i+1})\}\subseteq E(G)$. The number of alternating vertices contained in $P$ is denoted by $\alt(P)$. Using this concept, we now prove the following \rev{theorem}. \rev{See Figure~\ref{fig:lsa_alt} for an example.}

\begin{figure}
\centering
 \begin{tikzpicture}
	\draw [fill] (0,0) circle (0.05) node [above] {$\rho$};
	\draw [fill] (.5,-.5) circle (0.05) node [above right] {$\lsa(v)$};
	\draw[thick] (0,0) -- (.5,-.5);
	\draw [fill] (-.5,-.5) circle (0.05);
	\draw[ultra thick] (0,0) -- (-.5,-.5);
	\draw [fill] (-1,-1) circle (0.05) node [below left] {$x$};
	\draw[thick] (-1,-1) -- (-.5,-.5);
	\draw [fill] (-.5,-1.5) circle (0.05) node [below left] {$r$};
	\draw[ultra thick] (-.5,-1.5) -- (-.5,-.5);
	\draw [fill] (-.5,-2) circle (0.05) node [below] {$y$};
	\draw[thick] (-.5,-1.5) -- (-.5,-2);
	\draw [fill] (0,-1) circle (0.05);
	\draw [fill] (1,-1) circle (0.05);
	\draw[ultra thick] (0,-1) -- (.5,-.5);
	\draw[thick] (1,-1) -- (.5,-.5);
	\draw[ultra thick] (0,-1) -- (-.5,-1.5);
	\draw [fill] (.5,-1.5) circle (0.05) node [below right] {$v$};
	\draw[thick] (1,-1) -- (.5,-1.5);
	\draw[thick] (0,-1) -- (.5,-1.5);
	\draw [fill] (.5,-2) circle (0.05) node [below] {$q$};
	\draw[thick] (.5,-2) -- (.5,-1.5);
	\draw [fill] (1.5,-1.5) circle (0.05) node [below right] {$p$};
	\draw[thick] (1,-1) -- (1.5,-1.5);
	\end{tikzpicture}
\caption{\rev{Example for the proof of Theorem~\ref{thm:lsa}. The lowest stable ancestor of~$v$ is not equal to the root~$\rho$. An undirected path~$P_u$ between~$\rho$ and $\lsa(v)$ that does not contain the incoming arc of $\lsa(v)$ is indicated in bold. Path~$P_u$ contains one alternating vertex~$r$, for which the lowest stable ancestor is indeed equal to~$\rho$.}}
\label{fig:lsa_alt}
\end{figure}
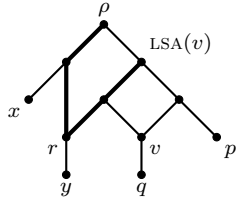

\begin{theorem}\label{thm:lsa}
Let~$N$ be a binary phylogenetic network on~$X$ whose root~$\rho$ is in some non-trivial biconnected component~$H$. Then there exists a lowest vertex in~$H$ with $\lsa(v)=\rho$. 
\end{theorem}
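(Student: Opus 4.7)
I would argue by contradiction, assuming that no lowest vertex $v$ of $H$ satisfies $\lsa(v)=\rho$, and reach a contradiction via a minimality argument on the number of alternating vertices in a carefully chosen undirected path in $H$.

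I'd start with two structural observations. First, any lowest vertex $v$ of $H$ is a reticulation: as $v$ is lowest, all its children leave $H$ through cut arcs, so $v$'s $H$-degree is entirely incoming, and non-trivial biconnectedness of $H$ forbids $v$ from having $H$-degree one (else its $H$-neighbour would be a cut vertex). Second, $u:=\lsa(v)$ is a tree vertex lying in $H$: the two directed paths from $u$ down to $v$ together trace an undirected cycle inside $H$, placing $u\in H$, while out-degree one at $u$ would force both paths to share the unique child of $u$ and push the merge point strictly below $u$, contradicting the minimality of $\lsa$.

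Under the standing assumption $u\neq\rho$, the $2$-edge-connectivity of $H$ provides an undirected path $P$ in $H$ from $\rho$ to $u$ that avoids the unique incoming arc $(p(u),u)$ of $u$. I would choose the pair $(v,P)$ so as to minimise $\alt(P)$. Necessarily $\alt(P)\ge 1$: otherwise $P$ is a directed $\rho\to u$ path, whose last arc is forced to be $(p(u),u)$, contradicting the choice of $P$. Let $r$ be the first alternating vertex of $P$ read from $\rho$; all edges of $P$ preceding $r$ are then forward arcs, so $P$ restricted to $\rho,\ldots,r$ is a directed path. The alternating condition at $r$, combined with the forward edge just before it, forces the next edge on $P$ also to be directed into $r$, so $r$ is a reticulation whose two parents $p_1,p_2$ lie on $P$ (and hence in $H$).

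The heart of the argument is the identity $\lsa(r)=\rho$. Using that for a reticulation one has $\lsa(r)=\lsa(\{p_1,p_2\})$, it suffices to exhibit two internally vertex-disjoint directed $\rho$-paths, one ending at $p_1$ and one at $p_2$. The directed initial segment of $P$ yields the first. The second is obtained by concatenating a directed $\rho\to p(u)$ path, the forbidden arc $(p(u),u)$, and the reversed tail of $P$ from $u$ back down to $p_2$. Minimality of $\alt(P)$ is invoked twice here: first to force $\alt(P)=1$, so that the reversed tail is genuinely a directed path (any additional alternating vertex would let us splice out a sub-path of $P$ to obtain a strictly shorter admissible path); and second to rule out shared intermediate vertices between the two $\rho$-paths, since any such shared vertex would allow rerouting $P$ to decrease $\alt(P)$. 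Once $\lsa(r)=\rho$ is in hand, if $r$ is itself a lowest vertex of $H$ we contradict the standing assumption immediately; otherwise, take any lowest vertex $v^*\prec r$ of $H$, argue (again via minimality) that $r\prec\lsa(v^*)$, and apply Lemma~\ref{lem:lsa:two:nodes} to obtain $\lsa(v^*)\preceq\lsa(r)=\rho$, the same contradiction. The most delicate step I foresee is producing and certifying the internal disjointness of the two $\rho$-paths witnessing $\lsa(r)=\rho$; exploiting the global minimality of $\alt(P)$, rather than a purely local consideration, is what makes this step go through.
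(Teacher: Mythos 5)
Your toolkit (alternating vertices, a biconnectivity-supplied undirected path of minimum $\alt$, Lemma~\ref{lem:lsa:two:nodes}) is the same as the paper's, but the two steps you yourself flag as delicate are genuine gaps, and they sit exactly where the paper's proof is organized differently. First, the identity $\lsa(r)=\rho$ is not established. Your second witness path runs through an arbitrary directed $\rho\to p(u)$ path, and nothing prevents it from meeting the directed initial segment of $P$ in vertices other than $\rho$; minimality of $\alt(P)$ cannot rule this out, because $\alt(P)=1$ is already the global minimum (every admissible path has at least one alternating vertex), and the natural reroutings either still contain the alternation at $r$ or would have to reach $u$ through the forbidden arc $(p(u),u)$. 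What your configuration actually yields is only the weaker fact that $\lsa(r)$ lies strictly above $u=\lsa(v)$ --- which is precisely what the paper extracts at the corresponding point. The paper converts this into ``$\lsa(r)=\rho$'' only because its extremal choice ranges over \emph{all} reticulations of $H$ (those whose lowest stable ancestor is closest to $\rho$), so that the existence of a reticulation with a strictly higher lsa is itself the contradiction; your extremal choice ranges only over lowest vertices of $H$ and their admissible paths, and $r$ need not be a lowest vertex, so no extremality is available to force $\lsa(r)=\rho$.

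Second, even granting $\lsa(r)=\rho$, the descent fails: for a lowest vertex $v^*\prec r$ of $H$ you need $v^*\prec r\prec\lsa(v^*)$ to invoke Lemma~\ref{lem:lsa:two:nodes}, but $\lsa(v^*)$ may lie strictly below $r$ (the component can narrow again below $r$'s child), and ``again via minimality'' cannot help since the alternation count is already at its floor. This missing step is essentially the whole second half of the paper's proof: there one takes a lowest element $v$ of the set of reticulations with $\lsa(v)=\rho$, supposes its child $c$ is still in $H$, shows $c$ is a tree vertex, and runs a second minimum-alternation argument on undirected paths from $\rho$ to $c$ avoiding the \emph{vertex} $v$ (not an arc), producing a reticulation $r\prec v$ with $v\prec\lsa(r)$, whence $\lsa(r)=\rho$ by Lemma~\ref{lem:lsa:two:nodes}, contradicting the choice of $v$. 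Your proposal has no substitute for that argument. A smaller, fixable point: forcing $\alt(P)=1$ cannot be done by ``splicing out a sub-path of $P$'' --- deleting a sub-path disconnects the walk; as in the paper you need an auxiliary directed path from $\rho$ to the second alternating vertex and a case analysis according to whether it uses the forbidden arc.
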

\begin{proof}

Let $\Gamma_0(H)$ be the set of reticulation vertices~$v$ in~$H$ for which the distance \leo{(length of a shortest directed path)} between $\rho$ and $\lsa(v)$ is minimum over all reticulation vertices in~$H$. Note that~$\Gamma_0(H)\neq\emptyset$. 

\medskip

We first show that $\lsa(v)=\rho$ for all $v\in \Gamma_0(H)$. Suppose this were not the case. Then there exists a vertex $v\in \Gamma_0(H)$ such that $\lsa(v)\prec \rho$. Note that~$\lsa(v)$ necessarily has outdegree~2 and therefore has indegree~1
since $N$ is binary and $\lsa(v)\neq \rho$. 
Denote the parent of $\lsa(v)$ by $v^*$. Since~$H$ is biconnected, there exists some undirected path from~$\rho$ to~$\lsa(v)$ that does not contain the edge $e=\{\lsa(v),v^*\}$. Let~$P_u=v_0,\dots,v_t$, where $v_0=\rho$ and $v_t=\lsa(v)$, be such an undirected path for which $\alt(P_u)$ is minimum. 

We claim that~$\alt(P_u)=1$. To see this, note first that since $v_0=\rho$, $v_t=\lsa(v)$ and~$v_{t-1}\neq v^*$, we know that $(v_0,v_1)$ and~$(v_{t},v_{t-1})$ are arcs of~$N$. Hence, $\alt(P_u)$ is odd and strictly positive. Assume for the sake of contradiction that $\alt(P_u)\not =1$, then we have $\alt(P_u)\geq 3$. Let~$v_k$ ($1<k<t$)  be the second alternating vertex contained in $P_u$ \leo{(when travelling from~$v_0$ to~$v_t$)}.

 
  Now fix a directed path~$P_d$ in $N$ from~$\rho$ to~$v_k$.
 
  If the arc $(v^*,\lsa(v))$ is not contained in $P_d$, then, we can find an undirected path from~$\rho$ to~$\lsa(v)$ that does not contain $e$ and has fewer alternating vertices than~$P_u$ by following~$P_d$ until we reach a vertex in~$\{v_k,\dots,v_t\}$ and then following~$P_u$ to~$\lsa(v)$. This gives a contradiction.
  

Now assume that the arc $(v^*,\lsa(v))$ is contained in $P_d$. Then we can find an undirected path from~$\rho$ to~$\lsa(v)$ that does not contain $e$ and has only one alternating vertex as follows. Follow~$P_u$ up to~$v_k$ and then follow~$P_d$ backward from~$v_k$ to~$\lsa(v)$. Since this path has fewer alternating vertices than~$P_u$, we again obtain a contradiction.
  

We have thus shown that~$\alt(P_u)=1$.  Denoting this alternating vertex in $P_u$ by $r$, then $r$ is necessarily a reticulation by the choice of $P_u$. 
 Hence, $P_u$ consists of two directed paths: a directed path from~$\rho$ to~$r$ that does not contain~$\lsa(v)$ and a directed path from~$\lsa(v)$ to~$r$. However, this means that~$\lsa(v)\prec \lsa(r)$, a contradiction to the assumption that~$v\in \Gamma_0(H)$. 

Hence, we know that $\Gamma_0(H)$ is the set of reticulation vertices $v$ of $H$ such that $\lsa(v)=\rho$ and that $\Gamma_0(H)$ is not empty.  

Now fix a vertex $v$ in $\Gamma_0(H)$ that is lowest over all vertices of $\Gamma_0(H)$, that is, there does not exist a vertex $u$ in $\Gamma_0(H)$ such that $v\prec u$.
It remains to show that~$v$ is lowest over all vertices of~$H$. Assume that this is not the case. Then the child~$c$ of~$v$ is also in~$H$. If~$c$ were a reticulation then, by Lemma~\ref{lem:lsa:two:nodes},  $\lsa(v)\preceq \lsa(c)$. However, this would imply that $\lsa(c)=\rho$, contradicting the choice of~$v$. Hence, $c$ is a tree \leo{vertex}.

Since~$H$ is biconnected, there exists some undirected path from~$\rho$ to~$c$ that does not contain $v$. Let~$P_u=w_0,\dots,w_t$ be such a path such that $\alt(P_u)$ is minimum. Note that we have $w_0=\rho$ and $w_t=c$.

Since  $c$ is a tree \leo{vertex} and $P_u$ does not contain its parent~$v$, $(w_{t},w_{t-1})$ is an arc of $N$. Together with $(w_0,w_1)$ being an arc in $N$, we know that $\alt(P_u)$ is odd and strictly positive. We now show, using a similar proof as above, that $\alt(P_u)=1$.
If this were not the case, then we would have $\alt(P_u)\geq 3$. Let~$w_k$ ($1<k<t$)  be the second alternating vertex contained in $P_u$. 
We know that $(w_k,w_{k-1})$ and $(w_k,w_{k+1})$ are two arcs contained in $N$. 
Now fix a directed path~$P_d$ in $N$ from~$\rho$ to~$w_k$.
 
If the vertex $v$ is not contained in $P_d$, then we can find an undirected path from~$\rho$ to~$c$ that does not contain $v$ and has fewer alternating vertices than~$P_u$ by following~$P_d$ from~$\rho$ it reaches a vertex from $\{w_k,\ldots ,w_t\}$ and then following~$P_u$ up to~$c$.  If~$v$ is contained in $P_d$, then we follow~$P_u$ from~$\rho$ to~$w_k$ and then follow~$P_d$ from~$w_k$ to~$c$ and obtain an undirected path from~$\rho$ to~$c$ that does not contain $v$ and has one alternating vertices, which is less than the number of alternating vertices in~$P_u$. In either case, we obtain a contradiction.

We have thus shown that~$\alt(P_u)=1$.  Denoting this alternating vertex in $P_u$ by $r$, then $r$ is necessarily a reticulation by the choice of $P_u$. 
 Hence, $P_u$ consists of two directed paths: a directed path from~$\rho$ to~$r$ that does not contain~$v$ and a directed path from~$c$ to~$r$. However, this means that~$v \prec \lsa(r)$, and hence $\lsa(v)\preceq \lsa(r)$ in view of Lemma~\ref{lem:lsa:two:nodes}. This implies that $r\in \Gamma_0(H)$, a contradiction to the assumption that~$v$ is lowest among $\Gamma_0(H)$.  
\qed\end{proof}

The following is a direct consequence of the above theorem. 

\begin{corollary}
\label{cor:simple:lowest}
Suppose that $N$ is a \leo{simple binary phylogenetic network}.  Let $H$ be the unique non-trivial biconnected component of $N$. Then there exists a lowest vertex $v$ of $H$ such that there exist two arc-disjoint directed paths from the root of $N$ to $v$.
\end{corollary}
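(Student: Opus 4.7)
The plan has two steps. First, I would show that the root~$\rho$ lies in~$H$. Indeed, if $\rho\notin H$, then each out-arc of~$\rho$ would lie in some biconnected component; since $H$ is the unique non-trivial biconnected component of~$N$, this component is trivial, making the arc itself a cut arc. But in a simple network every cut arc must originate in~$H$, contradicting $\rho\notin H$. With $\rho\in H$ in hand, Theorem~\ref{thm:lsa} applied directly to~$N$ yields a lowest vertex~$v$ of~$H$ satisfying $\lsa(v)=\rho$. Such a~$v$ must be a reticulation vertex: if $v$ were a tree vertex, its two out-arcs would leave~$H$ (as $v$ is lowest in~$H$), so its only potential $H$-arc would be its unique in-arc, which is not enough for $v$ to lie on a cycle of~$H$. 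Hence $v$ has in-degree~$2$, with two distinct parents $p_1,p_2\in H$.

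Second, I would derive the pair of arc-disjoint paths by a Menger-type argument. Suppose for contradiction that no two arc-disjoint directed paths from~$\rho$ to~$v$ exist; then by the arc-form of Menger's theorem for digraphs, some single arc $e=(a,b)$ lies on every directed $\rho$-to-$v$ path. If $b\notin\{\rho,v\}$, then $b$ is a stable ancestor of~$v$ strictly below~$\rho$, contradicting $\lsa(v)=\rho$. If $b=v$, then $e=(a,v)$ must be the last arc of every directed $\rho$-to-$v$ path; but the two in-arcs $(p_1,v)$ and $(p_2,v)$ each occur as the last arc of some such path, so $e$ would equal both of these distinct arcs, a contradiction. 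Here we use the fact that every vertex of the rooted acyclic digraph~$N$ is reachable from~$\rho$, so that paths $\rho\to\cdots\to p_i\to v$ exist for $i=1,2$. Finally, $b=\rho$ is excluded because $\rho$ has in-degree~$0$.

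The main obstacle has already been absorbed into Theorem~\ref{thm:lsa}; the passage from $\lsa(v)=\rho$ to arc-disjoint paths is a short connectivity argument whose only ingredients are the structure of~$v$ as a reticulation with two $H$-parents reachable from~$\rho$, together with Menger's theorem. The only genuine subtlety is the possibility that $v$ is itself a child of~$\rho$, which is handled uniformly by the case $b=v$ above, since the two in-arcs of~$v$ still come from distinct parents.
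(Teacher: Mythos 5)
Your proof is correct and takes the same route as the paper, which states Corollary~\ref{cor:simple:lowest} as a direct consequence of Theorem~\ref{thm:lsa} without further argument: you simply apply that theorem after observing that the root of a simple network must lie in~$H$. The details you supply --- that a lowest vertex of~$H$ is necessarily a reticulation, and the arc-Menger argument converting $\lsa(v)=\rho$ into two arc-disjoint directed paths from the root to~$v$ --- are exactly the steps the paper leaves implicit, and they are sound.
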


\section{Displaying binets by binary networks}\label{sec:binets}

A collection of binary level-1 binets is {\em compatible} if there exists some binary network that displays all binets from the collection. In \rev{this} section, we study the compatibility of binets. Our main result in this section (Theorem~\ref{thm:binary}) shows that when studying the compatibility of binets, we can restrict to binary level-1 networks. 

We will restrict ourselves throughout this section to \emph{thin} collections of binets, i.e. collections containing at most one binet on~$x$ and~$y$ for all distinct~$x,y\in X$. Clearly, any collection of binets that is not thin is not compatible.

First, we need some new definitions. Given a digraph $G$, a {\em sink set} of $G$  is a proper subset $U\subset V(G)$ such that there is no arc leaving $U$, that is, there exists no arc $(x,y)$ with $x\in U$ and $y\in V(G)\setminus U$. A bipartition (or \emph{split}) of $V(G)$ into nonempty sets~$A$ and~$B$, denoted $A|B$, is called
\begin{itemize}
\item \emph{Type I} if both $A$ and $B$ are sink sets (i.e. there is no arc from any element in~$A$ to any element in~$B$ or vice versa);
\item \emph{Type II} if either $A$ or $B$ (but not both) is a sink set; and
\item \emph{Type III} if for all $x\in A,y\in B$ $(x,y)$ is an arc in~$G$ if and only if $(y,x)$ is an arc in $G$.
\end{itemize}
We say that $A|B$ is a \emph{typed} split of $G$ if it is a split of Type I, II or III.

For a collection $\bn$ of binary level-1 binets on~$X$, we introduce the digraph $D(\bn)$ with vertex set $X$ and $(x,y)$ being an arc in $D(\bn)$ if $T(x,y)\in \bn$ or $R(x;y)\in \bn$. \rev{See Figure~\ref{fig:compExample} for an example.}


\rev{The following two lemmas show important properties of typed splits that will be used to establish Theorems~\ref{thm:binet} and~\ref{thm:binary}.}

\begin{lemma}
\label{lem:type}
Suppose that $\bn$ and $\bn'$ are two thin collections of binary level-1 binets on~$X$ with $\bn\subseteq \bn'$. Then each typed split of $D(\bn')$ is a typed split of $D(\bn)$. 
\end{lemma}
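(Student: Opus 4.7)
The plan is a straightforward case analysis on the type of the split $A|B$ in $D(\bn')$, leveraging the monotonicity $E(D(\bn))\subseteq E(D(\bn'))$ that follows from $\bn\subseteq \bn'$ and the definition of $D(\cdot)$. I will show that if $A|B$ is Type I, II or III in $D(\bn')$, then it remains Type I, II or III (though possibly a different one) in $D(\bn)$. The only slightly subtle point is that the type may change when passing to the smaller collection, so I will not try to preserve the type — only the property of being typed.

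First I would record the obvious monotonicity: since $\bn\subseteq \bn'$ and since an arc $(x,y)$ of $D(\cdot)$ exists exactly when $T(x,y)$ or $R(x;y)$ lies in the collection, we have $E(D(\bn))\subseteq E(D(\bn'))$. In particular, every sink set of $D(\bn')$ is a sink set of $D(\bn)$. This immediately disposes of Type I (no arcs between $A$ and $B$ in $D(\bn')$, hence none in $D(\bn)$, so $A|B$ is Type I in $D(\bn)$) and Type II (assume WLOG $A$ is a sink set of $D(\bn')$; then $A$ is a sink set of $D(\bn)$ as well, and $A|B$ is Type I or Type II in $D(\bn)$).

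The interesting case is Type III. Here, for each pair $x\in A$, $y\in B$, the split being Type III in $D(\bn')$ forces either both arcs $(x,y),(y,x)$ or neither to appear in $D(\bn')$. By the definition of $D(\bn')$, having both arcs forces $T(x,y)\in \bn'$ (since no single $R$-binet produces both arcs, and thinness of $\bn'$ forbids combining an $R$-binet with $T$), while having neither arc forces $\bn'$ to contain no binet on $\{x,y\}$ at all. Using $\bn\subseteq\bn'$ together with thinness of $\bn$, this restricts the possibilities for $\bn$ on the pair $\{x,y\}$ to: $T(x,y)\in \bn$, or no binet on $\{x,y\}$ in $\bn$. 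In either subcase the arcs $(x,y)$ and $(y,x)$ are both present or both absent in $D(\bn)$, so the Type III condition is verified in $D(\bn)$.

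Combining the three cases yields that $A|B$ is a typed split of $D(\bn)$. I do not expect any real obstacle here; the main care needed is to invoke the thinness assumption on $\bn'$ in the Type III case so as to rule out the mixed configuration in which $T(x,y)$ and an $R$-binet on $\{x,y\}$ would coexist, and then to note that $\bn\subseteq \bn'$ together with thinness of $\bn$ restricts $\bn$'s binet on the pair $\{x,y\}$ in the desired way.
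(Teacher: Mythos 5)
Your proof is correct and follows essentially the same route as the paper: the subgraph monotonicity $E(D(\bn))\subseteq E(D(\bn'))$ handles Types I and II, and thinness of $\bn'$ shows that any cross binet in a Type III split must be of tree type, so the split stays typed (the paper phrases the Type III conclusion as ``Type I or III''). Your only superfluous step is invoking thinness of $\bn$ in the Type III case — the containment $\bn\subseteq\bn'$ together with thinness of $\bn'$ already suffices.
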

\begin{proof}
Suppose that $A|B$ is a typed split of $D(\bn')$. If $A|B$ is of Type I in $D(\bn')$, then it is of Type I in $D(\bn)$ since~$D(\bn)$ is a subgraph of $D(\bn')$. Similarly, if $A|B$ is of Type II in $D(\bn')$, then it is of Type I or II in $D(\bn)$. If $A|B$ is of Type III in $D(\bn')$ then (since $\bn'$ is thin) any binet on~$x$ and~$y$ with $x\in A$ and $y\in B$ is $T(x,y)$. Therefore, $A|B$ is of Type I or III in $D(\bn)$.
\qed\end{proof}

\begin{lemma}
\label{lem:existence:type}
Suppose that $\bn$ is a thin collection of binary level-1 binets on $X$. If $\bn$ is displayed by a binary network, then $D(\bn)$ has a typed split.
\end{lemma}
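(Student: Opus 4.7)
The plan is to leverage the structural theorems proved earlier---particularly Theorem~\ref{thm:lsa} and Corollary~\ref{cor:simple:lowest}---to exhibit a typed split of $D(\bn)$ from any binary network $N$ displaying $\bn$. The main case I will handle is when the root $\rho$ of $N$ lies in a non-trivial biconnected component $H$. Here I apply Theorem~\ref{thm:lsa} to choose a lowest vertex $v \in H$ with $\lsa_N(v) = \rho$; such a $v$ is necessarily a reticulation (a tree vertex with $\lsa(v)=\rho$ would force $\rho$ to be its unique parent and the arc $(\rho,v)$ to be a trivial biconnected component, contradicting $v \in H$). Because $v$ is lowest in $H$, its unique outgoing arc $(v,c)$ is a cut arc of $N$, and I set $A = \cluster(v)$ and $B = X \setminus A$.

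The goal is then to show that $A$ is a sink set of $D(\bn)$, which makes $A\mid B$ a Type~II split (or Type~I if $B$ is also a sink set). Fix $x \in A$ and $y \in B$, and suppose $\bn$ contains a binet $\beta = N|_{\{x,y\}}$; then $\beta$ is binary level-1. Since every path from $\rho$ to $x$ must cross the cut arc $(v,c)$, the vertex $v$ is a stable ancestor of $x$ in $N$, while $v$ is not an ancestor of $y$ because $y \notin \cluster(v)$; totality of stable ancestors of $\{x\}$ then yields $v \prec \lsa_N(x,y)$. Using $\lsa_N(v) = \rho$ together with the two-arc-disjoint-paths structure of Corollary~\ref{cor:simple:lowest} applied to the relevant part of $N$, both parents of $v$ remain in the subnet $N|_{\{x,y\}}$, so $v$ persists as a reticulation in $\beta$. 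Since $\beta$ is level-1 with exactly one reticulation---lying above $x$ but not above $y$---this forces $\beta = R(y;x)$. Hence only arcs of the form $(y,x)$ can appear between $B$ and $A$ in $D(\bn)$, so $A$ is indeed a sink set.

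In the complementary situation, $\rho$ is a cut vertex of $N$ whose two children have disjoint clusters; in particular, if $N$ is a tree then every binet in $\bn$ is of tree type, $D(\bn)$ has symmetric arcs between every pair carrying a binet, and any nonempty bipartition of $X$ is a Type~III split. When $\rho$ is a cut vertex but $N$ still has a non-trivial biconnected component deeper down, I plan to push the analysis to the highest such component $H'$ and run the same argument within the sub-network rooted at the highest vertex of $H'$, using that the cut-arc path from $\rho$ down to that vertex preserves the relevant stable-ancestor relationships for pairs of leaves below it.

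The main obstacles I anticipate are (i) verifying that $\cluster(v) \neq X$ in the main case so that $B$ is nonempty---this requires a careful interaction between the minimality of $v$ in $H$ and the biconnected structure of $H$, ruling out the possibility that all leaves lie below $v$ by showing that such a configuration would force a strictly lower vertex of $H$ into the picture---and (ii) justifying that both parents of $v$ survive in $N|_{\{x,y\}}$, which I plan to establish via Lemma~\ref{lem:lsa:two:nodes} together with the fact that $\lsa_N(v) = \rho$ locates $\lsa_N(x,y)$ high enough above $v$ that both parental paths are retained.
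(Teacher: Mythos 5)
Your main case is essentially the paper's argument (it works with the lowest vertex $v$ of the root blob having $\lsa(v)=\rho$, splits off $\cluster(v)$, and forces all cross-binets to be reticulate with the $\cluster(v)$-side leaf below the reticulation; arguing directly on $\bn$ instead of the paper's detour through the set $\bn'$ of all level-1 binets of $N$ and Lemma~\ref{lem:type} is fine). The genuine gap is in your complementary case. When both arcs at the root are cut arcs you do not need to descend to a deeper non-trivial biconnected component, and the descent you propose does not in general produce a typed split of $X$: if $H'$ lies below one child of the root, any leaf $z$ on the other side forms a tree-type binet with every $x\in\cluster(v)$, so the arc $(x,z)$ leaves $\cluster(v)$ and destroys the sink-set property, while a leaf $y$ below the top of $H'$ but not below $v$ gives only $R(y;x)$, destroying Type III. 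Concretely, let $N$ be the network whose root has a leaf child $z$ and whose other child is the source of a $3$-cycle with reticulation above $x$ and with $y$ attached to the cycle; then $\bn=\{R(y;x),T(x,z),T(y,z)\}$, and $\{x\}\mid\{y,z\}$ (your $\cluster(v)\mid X\setminus\cluster(v)$) is not a typed split of $D(\bn)$, whereas the split that works is simply the two root-children clusters $\{x,y\}\mid\{z\}$, which is Type III because every cross-binet displayed by $N$ is of tree type. That one observation is the complete treatment of the case where the root is not in a non-trivial biconnected component; your ``push down'' plan as stated would fail.

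Two further steps in your main case are glossed over. First, nonemptiness of $X\setminus\cluster(v)$ is not a consequence of the minimality of $v$ in $H$ together with biconnectivity alone; it comes from recoverability, which the paper assumes throughout: with $\lsa(X)=\rho$, $\cluster(v)=X$ would give $\lsa(X)\preceq v\prec\rho$, a contradiction. So invoke recoverability (or first replace $N$ by $N|_X$) rather than seeking a structural contradiction inside $H$. Second, ``both parents of $v$ remain in $N|_{\{x,y\}}$'' does not by itself give that $v$ persists as a reticulation: the suppression step could in principle turn the two incoming paths into parallel arcs, one of which is then removed, leaving a tree-type binet. This must be excluded, as the paper does: the two arc-disjoint $\rho$--$v$ paths use both out-arcs of the binary root, so every path to the leaf $y\notin\cluster(v)$ starts along one of them and diverges strictly above $v$, and the divergence vertex retains out-degree two in the subnet, so the paths cannot collapse into parallel arcs (equivalently, such a collapse would create a stable ancestor of $v$ strictly below $\rho$, contradicting $\lsa(v)=\rho$). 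With these repairs your argument coincides with the paper's proof of the lemma.
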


\begin{proof}
Suppose that $\bn$ is displayed by a binary network. \rev{Then $\bn$ is displayed by a binary recoverable network~$N$}. Let $\bn'$ be the set of binary level-1 binets contained in $\bn(N)$. Then we have $\bn\subseteq \bn' \subseteq \bn(N)$. By Lemma~\ref{lem:type}, it suffices to show that $D(\bn')$ has a typed split.

Consider the root $\rho$ of~$N$, which is equal to~$\lsa(X)$ \rev{since~$N$ is recoverable}. Denote the two children of $\rho$ by $u_1$ and $u_2$. We consider two cases.

The first case is that \rev{at least one arc incident with $\rho$ is a cut arc. Then the other arc incident with $\rho$ is also a cut arc.} Then let $A=\cluster(u_1)$ and $B=\cluster(u_2)$. Note that $A|B$ is a split because neither $A$ nor $B$ is empty. In addition, for all $x\in A, y\in B$ we have $N|_{\{x,y\}}=T(x,y)$ and hence $A|B$ is a Type III split with respect to $D(\bn')$. 

In the second case, both arcs incident with $\rho$ are not cut arcs. Hence, the root $\rho$ is contained in a non-trivial biconnected component $H$ containing~$u_1$ and~$u_2$. By Corollary~\ref{cor:simple:lowest}, there exists a lowest vertex $v$ in $H$ with two arc-disjoint paths~$P_1,P_2$ from $\rho$ to $v$. 
Since $v$ is a lowest vertex in $H$, we know that $v$ is a reticulation vertex and the arc leaving~$v$ is a cut arc.
Let $B=\cluster (v)$ and~$A=X\setminus B$. Then~$B$ is clearly \rev{nonempty}. In addition, $A$ is nonempty, as otherwise $\lsa(X)\preceq v$, a contradiction to the fact that $\lsa(X)=\rho$ (as $N$ is recoverable). Therefore, $A|B$ is a split. 

Consider $x\in A$ and $y\in B$ and \rev{the subnetwork} $N|_{\{x,y\}}$. There is at least one directed path from~$\rho$ to~$x$, and each such path contains at least one arc of~$P_1$ or~$P_2$. Hence, in the process of obtaining $N|_{\{x,y\}}$ from~$N$, the paths~$P_1,P_2$ do not become parallel arcs. Therefore, $N|_{\{x,y\}}$ contains two arc-disjoint paths from~$\rho$ to~$v$ and we can conclude that $N|_{\{x,y\}} \not =T(x,y)$.
 Therefore, if $N|_{\{x,y\}}\in \bn^*$, that is, $N|_{\{x,y\}}$ is level-1, then $N|_{\{x,y\}}=R(x;y)$. This implies that there is no arc $(y,x)$. Therefore, $A|B$ is a Type I or Type II split of $D(\bn^*)$. 
\qed\end{proof}

Note that the condition  that $\bn$ is displayed by a binary network in the above lemma can not be weakened to that $\bn$ is displayed by a network. For example, consider the \rev{binet collection~$\bn$ and network~$N$ in Figure~\ref{fig:compExample}. Although network~$N$ displays~$\bn$, digraph $D(\bn)$ has no typed split (as can be easily checked)}.

\begin{figure}
\centerline{\includegraphics{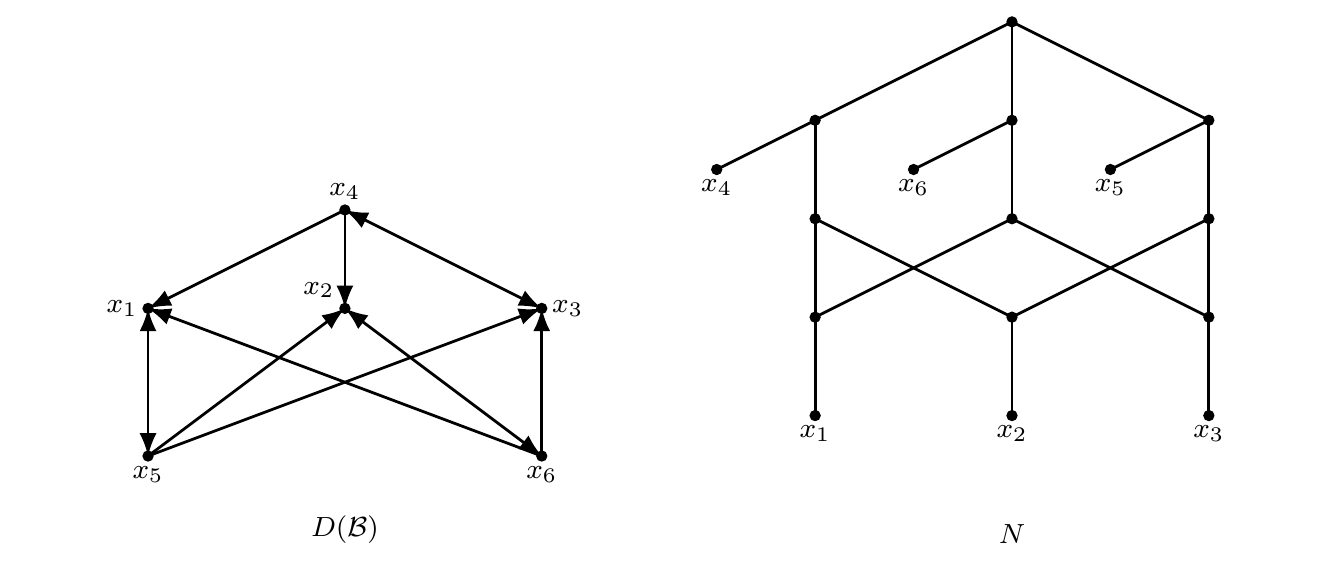}}
\caption{A digraph $D(\bn)$ representing the set of binets~$\bn=\{T(x_4,x_3),R(x_4;x_1),R(x_4;x_2), 
T(x_5,x_1),R(x_5;x_2)$, $R(x_5;x_3), 
T(x_6,x_2),R(x_6;x_1),R(x_6;x_3)\}$ and a network~$N$ displaying~$\bn$. The direction of the arcs in $N$ is downward, and omitted.}
\label{fig:compExample}
\end{figure}

We now introduce two operations, which can be used to combine two phylogenetic networks into a new one. Suppose that $N_1$ and $N_2$ are two phylogenetic networks with disjoint leaf sets. Let $T(N_1,N_2)$ be the phylogenetic network obtained from $N_1$ and $N_2$ by adding a new vertex $v$ and two arcs from $v$ to the roots of $N_1$ and $N_2$. In addition, the network \leo{$R(N_1;N_2)$} is obtained by taking a binet \leo{$R(y_1;y_2)$}, with~$y_1,y_2\notin L(N_1)\cup L(N_2)$, and replacing~$y_i$ by the root of~$N_i$, for~$i=1,2$. \rev{See Figure~\ref{fig:operations} for examples.}

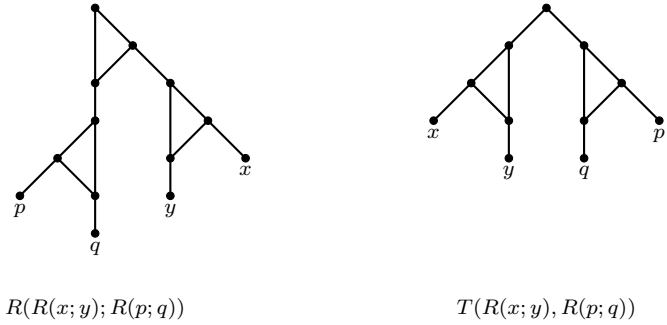
\begin{figure}
\centering
 	\begin{tikzpicture}
	\begin{scope}[xshift=0cm,yshift=0cm]
		\draw [fill] (0,0) circle (0.05);
		\draw [fill] (.5,-.5) circle (0.05);
		\draw [fill] (0,-1) circle (0.05);
		\draw [fill] (1,-1) circle (0.05);
		\draw [fill] (0,-1.5) circle (0.05);
		\draw[thick] (0,0) -- (.5,-.5);
		\draw[thick] (0,0) -- (0,-1);
		\draw[thick] (.5,-.5) -- (0,-1);
		\draw[thick] (.5,-.5) -- (1,-1);
		\draw[thick] (0,-1) -- (0,-1.5);
		\draw [fill] (1,-2) circle (0.05);
		\draw [fill] (1.5,-1.5) circle (0.05);
		\draw[thick] (1,-1) -- (1,-2);
		\draw[thick] (1,-1) -- (1.5,-1.5);
		\draw[thick] (1,-2) -- (1.5,-1.5);
		\draw[thick] (1,-2) -- (1,-2.5);
		\draw [fill] (2,-2) circle (0.05) node[below] {$x$};
		\draw [fill] (1,-2.5) circle (0.05) node[below] {$y$};
		\draw[thick] (1.5,-1.5) -- (2,-2);
		\draw [fill] (0,-2.5) circle (0.05);
		\draw[thick] (0,-1.5) -- (0,-2.5);
		\draw [fill] (-.5,-2) circle (0.05);
		\draw[thick] (0,-1.5) -- (-.5,-2);
		\draw[thick] (0,-2.5) -- (-.5,-2);
		\draw [fill] (0,-3) circle (0.05) node[below] {$q$};
		\draw [fill] (-1,-2.5) circle (0.05) node[below] {$p$};
		\draw[thick] (0,-3) -- (0,-2.5);
		\draw[thick] (-1,-2.5) -- (-.5,-2);
		\draw (0,-4) node {$R(R(x;y);R(p;q))$};
	\end{scope}
	\begin{scope}[xshift=6cm,yshift=0cm]
		\draw [fill] (0,0) circle (0.05);
		\draw [fill] (-.5,-.5) circle (0.05);
		\draw[thick] (0,0) -- (-.5,-.5);
		\draw[thick] (0,0) -- (.5,-.5);
		\draw [fill] (.5,-.5) circle (0.05);
		\draw [fill] (.5,-1.5) circle (0.05);
		\draw [fill] (1,-1) circle (0.05);
		\draw[thick] (.5,-.5) -- (.5,-1.5);
		\draw[thick] (.5,-.5) -- (1,-1);
		\draw[thick] (.5,-1.5) -- (1,-1);
		\draw[thick] (.5,-1.5) -- (.5,-2);
		\draw [fill] (.5,-2) circle (0.05) node[below] {$q$};
		\draw [fill] (1.5,-1.5) circle (0.05) node[below] {$p$};
		\draw[thick] (1.5,-1.5) -- (1,-1);
		\draw [fill] (-.5,-1.5) circle (0.05);
		\draw[thick] (-.5,-1.5) -- (-.5,-.5);
		\draw [fill] (-1,-1) circle (0.05);
		\draw[thick] (-1,-1) -- (-.5,-.5);
		\draw[thick] (-1,-1) -- (-.5,-1.5);
		\draw [fill] (-1.5,-1.5) circle (0.05) node[below] {$x$};
		\draw [fill] (-.5,-2) circle (0.05) node[below] {$y$};
		\draw[thick] (-1.5,-1.5) -- (-1,-1);
		\draw[thick] (-.5,-2) -- (-.5,-1.5);
		\draw (0,-4) node {$T(R(x;y),R(p;q))$};	
	\end{scope}
	\end{tikzpicture}
\caption{\rev{Examples of two networks built recursively by using the operations introduced in the text.}}
\label{fig:operations}
\end{figure}

For a binet set~$\bn$ on~$X$ and a subset~$A\subseteq X$, we define
\[
\bn|_A = \{S\in\bn \,: L(S)\subseteq A\}.
\]

\rev{The next theorem can be used to determine in polynomial time whether a collection of binary level-1 binets is displayed by some binary level-1 network. See Section~\ref{sec:complexity} for more details.}

\begin{theorem}
\label{thm:binet}
Suppose that $\bn$ is a thin collection of binary level-1 binets on $X$. If there exists a typed split $A|B$ of
$D(\bn)$ such that $\bn|_A$ and $\bn|_B$ are both displayed by some binary level-1 network, then $\bn$ is displayed by a binary level-1 network. Moreover, if $\bn$ is displayed by a binary level-1 network, then there exists at least one typed split of $D(\bn)$ and, for each typed split $A|B$ of $D(\bn)$, $\bn|_A$ and $\bn|_B$ are both displayed by some binary level-1 network.
\end{theorem}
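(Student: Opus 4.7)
The plan is to prove the two implications separately. \emph{Forward direction.} Given a typed split $A|B$ of $D(\bn)$ and binary level-1 networks $N_A, N_B$ displaying $\bn|_A, \bn|_B$ respectively, I would construct a combined network $N$ by case analysis on the type of the split: for Type I or Type III, set $N := T(N_A, N_B)$; for Type II with $B$ the sink set, set $N := R(N_A; N_B)$, and symmetrically $N := R(N_B; N_A)$ when $A$ is the sink. The resulting $N$ is manifestly binary, and it is level-1 because $T$ creates no new reticulation-bearing biconnected component while $R$ creates a single triangle-shaped component containing exactly one reticulation. To check that $N$ displays $\bn$ I would first show that $N|_Y = N_A|_Y$ for every $Y \subseteq L(N_A)$, and symmetrically for $N_B$; this follows by chasing $\lsa$ through the $T$ and $R$ constructions, using the observation that all paths from the root of $N$ into $L(N_A)$ (respectively $L(N_B)$) enter $N_A$ (respectively $N_B$) through a single arc. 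Hence every binet in $\bn|_A \cup \bn|_B$ is displayed. For a cross pair $\{x,y\}$ with $x\in A$ and $y\in B$, the split type together with the thinness of $\bn$ pin down which binet (if any) appears in $\bn$: Type I leaves no cross-binet to display, Type III forces $T(x,y)$, and Type II (with $B$ the sink) forces $R(x;y)$; a direct inspection of $N|_{\{x,y\}}$ confirms the match in each case.

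\emph{Reverse direction.} Suppose $\bn$ is displayed by a binary level-1 network $\hat{N}$. The existence of at least one typed split of $D(\bn)$ follows immediately from Lemma~\ref{lem:existence:type} applied to $\hat{N}$ (which is in particular a binary network). For an arbitrary typed split $A|B$ of $D(\bn)$, the natural witnesses are $\hat{N}|_A$ and $\hat{N}|_B$: these subnetworks are binary and level-1 (taking subnets preserves both properties, since the suppression of in-/out-degree-one vertices cannot raise the arity nor the reticulation count in any biconnected component), and by transitivity of the subnet operation $(\hat{N}|_A)|_{L(S)} = \hat{N}|_{L(S)} = S$ for each $S \in \bn|_A$, so $\hat{N}|_A$ displays $\bn|_A$, and analogously $\hat{N}|_B$ displays $\bn|_B$.

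\emph{Main obstacle.} The principal technical work lies in the forward direction: verifying that the subnet operation on the combined network behaves correctly. This requires careful tracking of $\lsa$ values and of the suppression step through the $T$ and $R$ constructions, in order to conclude that $N|_Y$ agrees with $N_A|_Y$ (or $N_B|_Y$) on one-sided subsets, and that $N|_{\{x,y\}}$ equals the binet dictated by the split type on cross pairs. The case analysis is routine but the bookkeeping must be done with care, in particular to rule out accidental identifications introduced by parallel-arc suppression.
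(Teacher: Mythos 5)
Your proposal is correct and takes essentially the same route as the paper: the forward direction builds $T(N_A,N_B)$ for Type I/III splits and $R(N_A;N_B)$ (with the sink side below the reticulation) for Type II, and the reverse direction combines Lemma~\ref{lem:existence:type} with the observation that $\bn|_A\subseteq\bn(N|_A)$ and $\bn|_B\subseteq\bn(N|_B)$. The only difference is that you spell out the subnet-restriction and cross-pair verifications that the paper asserts directly, which is fine.
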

\begin{proof}
First suppose that there exists a typed split $A|B$ of $D(\bn)$ such that $\bn|_A$ and $\bn|_B$ are displayed by binary level-1 networks~$N_A$ and~$N_B$, respectively.

If $A|B$ is a Type I or Type III split of $D(\bn)$, then consider the network $N=T(N_A,N_B)$. Then $N$ is a binary level-1 phylogenetic network on $X$ and
 \[
 \bn \subseteq \{T(x,y)\,:x \in A, y\in B\} \cup \bn(N_A) \cup \bn(N_B) =\bn(N),
 \]
 and so $\bn$ is displayed by $N$.

If $A|B$ is a Type II split of $D(\bn)$, then without loss of generality we may assume that $B$ is a sink set in $D(\bn)$. Now consider the network \leo{$N=R(N_A;N_B)$}. 
 Then $N$ is a binary level-1 phylogenetic network on $X$ and
 \[
 \bn\subseteq \{R(x;y)\,:x \in A, y\in B\} \cup \bn(N_A) \cup \bn(N_B) =\bn(N),
 \]
 and so $\bn$ is displayed by $N$.

Now suppose that $\bn$ is displayed by a binary level-1 network~$N$. By 
Lemma~\ref{lem:existence:type}, there exists a typed split $A|B$ of $D(\bn)$. Then $\bn|_A\subseteq \bn(N|_A)$ and $\bn|_B\subseteq \bn(N|_B)$.  
\qed\end{proof}

\rev{We now prove the main result of this section.}

\begin{theorem}
\label{thm:binary}
Suppose that $\bn$ is a thin collection of binary level-1 binets on $X$. Then  $\bn$ is displayed by a binary level-1 network if and only if it is displayed by a binary network.
\end{theorem}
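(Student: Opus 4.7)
The backward direction of the equivalence is immediate, since every binary level-$1$ network is in particular a binary network. For the substantive forward direction, the strategy is a straightforward induction on~$|X|$, with Lemma~\ref{lem:existence:type} guaranteeing a typed split of~$D(\bn)$ whenever $\bn$ is displayed by a binary network, and Theorem~\ref{thm:binet} providing the gluing step.

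The base cases $|X|\le 2$ are immediate: the thin collection~$\bn$ contains at most one binet, which is itself a binary level-$1$ network on its leaf-set. For the inductive step with $|X|\ge 3$, suppose that~$\bn$ is displayed by a binary network~$N$. Lemma~\ref{lem:existence:type} yields a typed split~$A|B$ of~$D(\bn)$, with both~$A$ and~$B$ nonempty proper subsets of~$X$. I would then apply the induction hypothesis to the sub-collections~$\bn|_A$ and~$\bn|_B$: each is a thin collection of binary level-$1$ binets on a strictly smaller ground set, and each is displayed by the subnet~$N|_A$ or~$N|_B$, which remains binary. Note that $(N|_A)|_{\{x,y\}}=N|_{\{x,y\}}$ for all $\{x,y\}\subseteq A$, so that every binet in~$\bn|_A$ is indeed displayed by~$N|_A$ (and symmetrically for~$B$). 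The induction hypothesis thus produces binary level-$1$ networks displaying~$\bn|_A$ and~$\bn|_B$, and a final appeal to Theorem~\ref{thm:binet} glues these into a binary level-$1$ network displaying~$\bn$.

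The main obstacle is verifying that the subnet~$N|_A$ of a binary network is again binary. This amounts to a short case analysis on the suppression of indegree-$1$ outdegree-$1$ vertices and parallel arcs performed during the subnet construction: a tree vertex that loses one of its two children becomes suppressible, a reticulation that loses one of its two parents similarly becomes suppressible, and any parallel arcs introduced by these suppressions are collapsed in turn, so that every surviving internal vertex retains a binary degree profile. A secondary bookkeeping point is that some element of~$A$ may not appear in any binet of~$\bn|_A$ (which can occur, for instance, when $|A|=1$); in that case one works with the combined leaf-set of~$\bn|_A$ in the induction and attaches any missing leaves at the end by subdividing arcs, an operation which preserves both the binary and level-$1$ properties and integrates cleanly with the~$T(\cdot,\cdot)$ and~$R(\cdot;\cdot)$ gluing constructions used in the proof of Theorem~\ref{thm:binet}.
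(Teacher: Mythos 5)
Your proof is correct and follows essentially the same route as the paper: induction on $|X|$, with Lemma~\ref{lem:existence:type} supplying a typed split $A|B$ of $D(\bn)$, the observation that $\bn|_A\subseteq\bn(N|_A)$ and $\bn|_B\subseteq\bn(N|_B)$ feeding the induction hypothesis, and Theorem~\ref{thm:binet} performing the gluing. The additional points you verify explicitly---that subnets of a binary network remain binary, and that leaves of $A$ missing from $\bn|_A$ can be attached harmlessly---are details the paper leaves implicit, and your treatment of them is sound.
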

\begin{proof}
Suppose that $\bn$ is displayed by a binary network. We claim that   $\bn$ is also displayed by a binary level-1 network. We shall establish this claim by induction on $|X|$.

If $|X|=2$, then $\bn$ contains at most one binet, which has leaf set $X$. Therefore we know that $\bn$ is displayed by a binary level-1 network.

Now assume that $|X|> 2$, and the claim holds for all sets $X'$ with $2\leq |X'|<|X|$. Let $N$ be a binary network on $X$ with $\bn\subseteq \bn(N)$. By Lemma~\ref{lem:existence:type}, there exists a typed split $A|B$ of $D(\bn)$. Note that  $\bn|_A \subseteq \bn(N|_A)$ and $\bn|_B \subseteq \bn(N|_B)$. Therefore, by induction, each of $\bn|_A$ and $\bn|_B$ is displayed by a binary level-1 network. By Theorem~\ref{thm:binet}, it follows that $\bn$ is displayed by a binary level-1 network. 
\qed\end{proof}

\section{Binets determine the number of reticulations of a binary level-1 network}\label{sec:retic}

In this section we show that, that although the collection of binets displayed by a level-1 network does not necessarily determine the network (see Figure~\ref{fig:samebinets}), it does in fact determine the number of reticulations in the network. We begin by showing that it suffices to consider level-1 networks in which all cycles (in the underlying undirected graph) have length~3.

First, we introduce some further notation. A \emph{semi-cycle}~$C$ of an acyclic directed graph is the union of two non-identical, internally-vertex-disjoint, directed paths from~$s$ to~$t$, with~$s=s(C)$ and~$t=t(C)$ two distinct vertices that are referred to as the {\em source} and {\em terminal} of $C$, respectively.  The \emph{length} of a semi-cycle is the number of distinct vertices that it contains.

\rev{We now show that we may restrict to networks in which all semi-cycles have length~3.}

\begin{lemma}
	\label{lem:cycles}
	If~$N$ is a binary level-1 network, then there exists a binary level-1 network $N'$ in which every semi-cycle has length~3, such that $\bn(N')=\bn(N)$ and~$N$ and~$N'$ have the same number of reticulation vertices.
\end{lemma}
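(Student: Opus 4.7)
The plan is to induct on the number of semi-cycles of $N$ that have length greater than $3$. The base case is trivial; for the induction step I would pick one semi-cycle $C$ of length $>3$ in $N$, with source $s$ and (reticulation) terminal $t$. Since $N$ is level-$1$, $C$ is precisely the non-trivial biconnected component containing $s$ and $t$, so every internal tree vertex of $C$ has outdegree $2$ consisting of one cycle arc and one cut arc leading to a pendant subnetwork, and $t$ itself has a single outgoing cut arc to a pendant subnetwork below it. Let $T_1,\dots,T_m$ denote the pendants hanging from the internal tree vertices of $C$ (on either side of the cycle); length $>3$ forces $m\geq 2$.

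I would then form $N^*$ by local surgery on $C$. Replace $C$ by a length-$3$ semi-cycle on $\{s,v,t\}$ with arcs $s\to v$, $s\to t$, $v\to t$ (where $v$ is a new tree vertex), and attach below $v$, via a single cut arc, a subnetwork $S_v$ built by gluing $T_1,\dots,T_m$ into a binary caterpillar using $m-1$ fresh tree vertices. The pendant below $t$ and everything in $N$ outside $C$ are left unchanged. By construction $N^*$ is binary and level-$1$, and because $S_v$ contains no reticulation beyond those already present inside $T_1,\dots,T_m$ and $t$ is retained as the only reticulation of the new length-$3$ semi-cycle, $N^*$ and $N$ have the same number of reticulations.

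The main obstacle is showing $\bn(N^*)=\bn(N)$, which I would handle by a case analysis on the positions of $x,y$ relative to $s$. If neither leaf is below $s$, the surgery is invisible and $N|_{\{x,y\}}=N^*|_{\{x,y\}}$. If exactly one of $x,y$ lies below $s$, then $\lsa(x,y)$ sits strictly above $s$ in both networks, and in both the portion in the $s$-subtree restricts to a single path down to the relevant leaf, which collapses to one arc after suppressing in-degree-$1$ out-degree-$1$ vertices and parallel arcs. If both $x,y$ lie below $s$, split again: (i) both lie in the unchanged pendant below $t$; (ii) one lies in the pendant below $t$ and the other in some $T_i$, in which case $\lsa(x,y)=s$ and the two arc-disjoint $s$-to-$t$ paths in the new length-$3$ cycle (namely $s\to v\to t$ and $s\to t$) produce exactly the same reticulate binet with the below-$t$ leaf as the reticulation leaf; (iii) both lie in pendants from the cycle, where intra-pendant binets are inherited directly from the unchanged $T_i$, and inter-pendant binets are tree binets in both $N$ and $N^*$ because restricting any $T_i$ to a single of its leaves collapses, via suppressions of in-degree-$1$ out-degree-$1$ vertices and parallel arcs, to a single arc, so any reticulation structure inside $T_i$ plays no role in such a binet.

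Finally, iterating this construction strictly decreases the number of semi-cycles of length $>3$, creates no new long semi-cycles (the new $v$ and caterpillar vertices are tree vertices), and operates inside pairwise disjoint biconnected components, so after finitely many steps we obtain the desired network $N'$.
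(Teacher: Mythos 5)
Your proposal is essentially the paper's own proof: the paper replaces the subnetwork rooted at the source $s$ of a long semi-cycle by $R(N^*;N_w)$, where $N^*$ combines the side pendants into a binary tree and $N_w$ is the pendant below the terminal, which is exactly your length-3 cycle with the caterpillar $S_v$ below the new vertex $v$ and the terminal's pendant kept below $t$. The paper simply asserts that the resulting network has the required properties, while your case analysis of the binets and the reticulation count correctly fills in that verification, so the proposal is correct and takes the same approach.
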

\begin{proof}
	Consider a semi-cycle of~$N$ with source~$s$ and terminal~$t$ and length at least~4. Let~$(u_1,v_1)$, $\ldots$ ,$(u_k,v_k)$, $(t,w)$ be the arcs leaving the semi-cycle. Then~$k\geq 2$.
	Let~$N^*$ be a network obtained from a binary tree on~$\{v_1,\ldots ,v_k\}$ by replacing~$v_i$ by the \leo{subgraph} of~$N$ rooted at~$v_i$, for~$i=1,\ldots ,k$. Let~$N_w$ be the \leo{subgraph} of~$N$ rooted at~$w$.
	Then we construct~$N'$ from~$N$ by replacing the \leo{subgraph} of~$N$ rooted at~$s$ by the network~\leo{$R(N^*;N_w)$}. It is straightforward to see that~$N'$ is a binary level-1 network with the required properties.\qed\end{proof}
	
	\rev{We now establish the main result of this section.}
	
	\begin{theorem}
	\label{thm:retic}
	If~$N_1$ and~$N_2$ are rooted binary level-1 phylogenetic networks on~$X$ with $\bn(N_{1})=\bn(N_{2})$ then~$N_1$ and~$N_2$ have the same number of reticulation vertices.
\end{theorem}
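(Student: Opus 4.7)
My plan is to show that both networks have the same reticulation count by extracting $r(N)$ from the binet collection as an explicit combinatorial invariant. By Lemma~\ref{lem:cycles} I may first assume that every semi-cycle of $N_1$ and $N_2$ has length~$3$, since that reduction preserves both $\bn(\cdot)$ and reticulation counts. Under this assumption each reticulation belongs to a unique triangular semi-cycle $C$ with source $s_C$, tree-middle $v_C$ and reticulation $t_C$, and I write $A_C$ and $B_C$ for the cut-arc clusters obtained from the two outgoing cut arcs below $v_C$ and below $t_C$, respectively.

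For each $y\in X$ I set
\[
B(y) \;=\; \{x \in X : R(x;y) \in \bn\}, \qquad \bn = \bn(N_1) = \bn(N_2),
\]
so $B(y)$ depends only on $\bn$. I will show that for every binary level-1 network $N$ on $X$ with $\bn(N)=\bn$ and all semi-cycles of length~$3$,
\[
r(N) \;=\; \bigl| \{B(y) : y\in X\}\setminus\{\emptyset\} \bigr|.
\]
Because the right-hand side is manifestly an invariant of $\bn$, this immediately yields $r(N_1)=r(N_2)$.

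To establish the identity I will first verify, from the definition of $N|_{\{x,y\}}$ together with the preliminaries on $\lsa$, that $R(x;y)\in\bn(N)$ if and only if there is a (necessarily unique) semi-cycle $C$ of $N$ with $x \in A_C$ and $y \in B_C$. Consequently $B(y) = \bigcup_{C:\, y \in B_C} A_C$, and since the semi-cycles with $y\in B_C$ form a chain under inclusion of their (laminar) $B$-clusters, with the $A$-sides of distinct chain members pairwise disjoint, this union is disjoint. I will then construct a bijection between the semi-cycles of $N$ and the nonempty $B(y)$-values as follows. For each semi-cycle $C$, a simple induction on the size of the subnetwork rooted at the child of $t_C$ (splitting into tree-vertex-root and semi-cycle-source-root cases) provides a representative leaf $y_C \in B_C$ that lies on the $B$-side of no sub-semi-cycle strictly nested inside $C$, so that the chain of $y_C$ has $C$ as its unique lowest element. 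The map $C \mapsto B(y_C)$ is then surjective because any $y$ with $B(y) \neq \emptyset$ shares its chain with $y_C$ for $C$ the lowest $B$-ancestor of $y$.

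The main obstacle will be injectivity: $C \neq C' \Rightarrow B(y_C) \neq B(y_{C'})$. For this I rely on two facts. First, distinct cut arcs in a binary network produce distinct cut-arc clusters, so distinct semi-cycles yield distinct $A$-sides. Second, using the laminar property of cut-arc clusters, the disjoint decomposition $B(y_C) = \bigsqcup_i A_{C^{(i)}}$ is in fact the unique maximal decomposition of $B(y_C)$ into cut-arc subclusters: any strictly larger cut-arc cluster containing some $A_{C^{(i)}}$ would be the cluster of $s_{C^{(i)}}$ or higher, hence would also contain $B_{C^{(i)}}$ and in particular $y_C$ itself, contradicting $y_C \notin B(y_C)$. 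Consequently the chain of $y_C$, and therefore its lowest element $C$, is recoverable from $B(y_C)$ alone, which gives injectivity. Combining injectivity, surjectivity and the formula for $B(y)$ yields the desired identity and completes the proof.
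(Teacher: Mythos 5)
Your proposal is correct, but it proves Theorem~\ref{thm:retic} by a genuinely different route than the paper. The paper argues by induction on $|X|$: it deletes a leaf~$x$, applies the induction hypothesis to $N_1|_{X'}$ and $N_2|_{X'}$, and then runs a case analysis (Figures~\ref{fig:addingx} and~\ref{fig:2cases}) on how $x$ can re-attach in the two networks, deriving a contradiction from specific displayed binets whenever the attachments would change the reticulation count differently. You instead, after the same reduction via Lemma~\ref{lem:cycles} to networks whose semi-cycles are triangles, exhibit the reticulation number as an explicit invariant of the binet set, $r(N)=\bigl|\{B(y):y\in X\}\setminus\{\emptyset\}\bigr|$ with $B(y)=\{x:R(x;y)\in\bn\}$, using the characterization ``$R(x;y)$ displayed iff $x\in A_C$, $y\in B_C$ for a (unique) triangle $C$'' (essentially Lemma~\ref{lem:semi-cycle} specialized to triangles), the chain structure of the triangles above a fixed leaf, and a bijection between triangles and the distinct nonempty values $B(y)$. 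I checked the key steps: the displayed-binet criterion, the disjointness $B(y)=\bigsqcup_i A_{C^{(i)}}$, the existence of the representative leaf $y_C$, and the injectivity argument via maximal cut-arc subclusters all hold in the binary level-1 triangle setting, so the plan is sound. What each approach buys: the paper's induction is shorter and needs almost no structural machinery beyond the definition of subnets; your argument is non-inductive and stronger in output, since it gives a formula that computes the reticulation number (indeed, it recovers the nesting chains of reticulation cycles) directly from $\bn$, which also immediately yields an algorithmic way to read off $r(N)$ from binets. Two caveats for a full write-up: you should justify (i) laminarity of cut-arc clusters and (ii) the claim that distinct cut arcs have distinct clusters; you assert (ii) for arbitrary binary networks, which is more than you need and not obvious at that generality, whereas for binary level-1 networks both facts follow easily from the fact that every arc leaving a triangle is a cut arc and cut-arc heads have in-degree one, which is all your argument uses.
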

	\begin{proof}
	The proof is by induction on the number of leaves~$|X|$. The induction basis for~$|X|=2$ is clear. Now suppose that~\leo{$N_1$ and~$N_2$ are two non-isomorphic} rooted binary level-1 phylogenetic networks on~$|X|\geq 3$ with $\bn(N_{1})=\bn(N_{2})$ but with different numbers of reticulation vertices. We add an outdegree-1 root to each of~$N_1$ and~$N_2$ with an arc to the original root. By Lemma~\ref{lem:cycles}, we may assume that all semi-cycles in~$N_1$ and~$N_2$ have length~3.
	
	Choose an arbitrary leaf~$x\in X$ and let~$X'=X\setminus \{x\}$. Let $N_1'$ and $N_2'$ be the networks obtained from $N_1|_{X'}$ and $N_2|_{X'}$, respectively, by adding an outdegree-1 root with an arc to the original root. Then~$N_1'$ and~$N_2'$ have the same number of reticulation vertices by induction. 
	
	Since all semi-cycles in~$N_1$ and~$N_2$ are assumed to have length~3, there are three cases for the location of~$x$ in each of the networks~$N_1,N_2$, illustrated in Figure~\ref{fig:addingx}.
	
	If the parent of~$x$ is in a semi-cycle in~$N_i$, let~$v_i$ be the source of this semi-cycle, and let~$v_i$ be the parent of~$x$ otherwise. Let~$B_i := \cluster(v_i)\setminus \{x\}$ and $A_i := X'\setminus B_i$ \rev{(recall that $\cluster(v_i)$ denotes the cluster of~$v_i$)}.
	
		We now consider the different ways in which we could add~$x$ to both networks. Since~$N_1$ and~$N_2$ have different numbers of reticulation vertices, there are two cases to consider (after eliminating symmetric cases), as illustrated in Figure~\ref{fig:2cases}.
	
	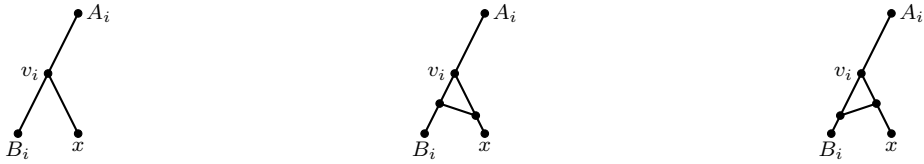
\begin{figure}
\centering\subfloat
		[The parent of~$x$ is not in a semi-cycle.]
		{
		\hspace{1.5cm}\begin{tikzpicture}[baseline={([yshift=-1em] current bounding box.north)}]
		\draw[thick] (0,0) -- (0.4,0.8) -- (0.8,0);
		\draw[thick] (0.4,0.8) -- (0.8, 1.6);
		\draw [fill] (0,0) circle (0.05) node [below] {$B_i$};
		\draw [fill] (0.4,0.8) circle (0.05) node [left] {$v_i$};
		\draw [fill] (0.8,0) circle (0.05) node [below] {$x$};
		\draw [fill] (0.8,1.6) circle (0.05) node [right] {$A_i$};
		\end{tikzpicture}\hspace{1.5cm}
		}\qquad
		\subfloat
		[The parent of~$x$ is the terminal of a semi-cycle.]
		{
		\hspace{1.5cm}\begin{tikzpicture}[baseline={([yshift=-1em] current bounding box.north)}]
		\draw[thick] (0,0) -- (0.4,0.8) -- (0.8,0);
		\draw[thick] (0.4,0.8) -- (0.8, 1.6);
		\draw[thick] (0.2,0.4) -- (0.68,0.24);	
		\draw [fill] (0.2,0.4) circle (0.05);		
		\draw [fill] (0.68,0.24) circle (0.05);					
		\draw [fill] (0,0) circle (0.05) node [below] {$B_i$};
		\draw [fill] (0.4,0.8) circle (0.05) node [left] {$v_i$};
		\draw [fill] (0.8,0) circle (0.05) node [below] {$x$};
		\draw [fill] (0.8,1.6) circle (0.05) node [right] {$A_i$};
		\end{tikzpicture}\hspace{1.5cm}
		}\qquad
		\subfloat[The parent of~$x$ is a non-terminal non-source vertex of a semi-cycle.]
		{
		\hspace{1.5cm}\begin{tikzpicture}[baseline={([yshift=-1em] current bounding box.north)}]
		\draw[thick] (0,0) -- (0.4,0.8) -- (0.8,0);
		\draw[thick] (0.4,0.8) -- (0.8, 1.6);
		\draw[thick] (0.6,0.4) -- (0.12,0.24);
		\draw [fill] (0.6,0.4) circle (0.05);
		\draw [fill] (0.12,0.24) circle (0.05);
		\draw [fill] (0,0) circle (0.05) node [below] {$B_i$};
		\draw [fill] (0.4,0.8) circle (0.05) node [left] {$v_i$};
		\draw [fill] (0.8,0) circle (0.05) node [below] {$x$};
		\draw [fill] (0.8,1.6) circle (0.05) node [right] {$A_i$};
		\end{tikzpicture}\hspace{1.5cm}
		}
		\caption{\label{fig:addingx}The three cases for the location of~$x$ in each of the networks~$N_1$ and~$N_2$ in the proof of Theorem~\ref{thm:retic}.}
		\end{figure}
	
	\begin{figure}
	\centering
    		\subfloat[Case 1: the parent of $x$ is not in a semi-cycle in~$N_1$ \rev{but is} the terminal of a semi-cycle in~$N_2$.]{
		\begin{tikzpicture}[baseline={([yshift=-1em] current bounding box.north)}]
		\draw[thick] (0,0) -- (0.4,0.8) -- (0.8,0);
		\draw[thick] (0.4,0.8) -- (0.8, 1.6);
		\draw [fill] (0,0) circle (0.05) node [below] {$B_1$};
		\draw [fill] (0.4,0.8) circle (0.05) node [left] {$v_1$};
		\draw [fill] (0.8,0) circle (0.05) node [below] {$x$};
		\draw [fill] (0.8,1.6) circle (0.05) node [right] {$A_1$};
		\draw[thick] (0.4,-1) node {$N_1$};
		\end{tikzpicture}
		\hspace{1cm}\begin{tikzpicture}[baseline={([yshift=-1em] current bounding box.north)}]
		\draw[thick] (0,0) -- (0.4,0.8) -- (0.8,0);
		\draw[thick] (0.4,0.8) -- (0.8, 1.6);
		\draw[thick] (0.2,0.4) -- (0.68,0.24);			
		\draw [fill] (0.2,0.4) circle (0.05);
		\draw [fill] (0.68,0.24) circle (0.05);
		\draw [fill] (0,0) circle (0.05) node [below] {$B_2$};
		\draw [fill] (0.4,0.8) circle (0.05) node [left] {$v_2$};
		\draw [fill] (0.8,0) circle (0.05) node [below] {$x$};
		\draw [fill] (0.8,1.6) circle (0.05) node [right] {$A_2$};
		\draw[thick] (0.4,-1) node {$N_2$};
		\end{tikzpicture}}
		\hspace{2cm}
    		\subfloat[Case 2: the parent of $x$ is not in a semi-cycle in~$N_1$ \rev{but is} the non-terminal non-source vertex of a semi-cycle in~$N_2$.]{
		\begin{tikzpicture}[baseline={([yshift=-1em] current bounding box.north)}]
		\draw[thick] (0,0) -- (0.4,0.8) -- (0.8,0);
		\draw[thick] (0.4,0.8) -- (0.8, 1.6);	
		\draw [fill] (0,0) circle (0.05) node [below] {$B_1$};
		\draw [fill] (0.4,0.8) circle (0.05) node [left] {$v_1$};
		\draw [fill] (0.8,0) circle (0.05) node [below] {$x$};
		\draw [fill] (0.8,1.6) circle (0.05) node [right] {$A_1$};
				\draw[thick] (0.4,-1) node {$N_1$};
		\end{tikzpicture}\hspace{1cm}\begin{tikzpicture}[baseline={([yshift=-1em] current bounding box.north)}]
		\draw[thick] (0,0) -- (0.4,0.8) -- (0.8,0);
		\draw[thick] (0.4,0.8) -- (0.8, 1.6);
		\draw[thick] (0.6,0.4) -- (0.12,0.24);	
		\draw [fill] (0.6,0.4) circle (0.05);
		\draw [fill] (0.12,0.24) circle (0.05);	
		\draw [fill] (0,0) circle (0.05) node [below] {$B_2$};
		\draw [fill] (0.4,0.8) circle (0.05) node [left] {$v_2$};
		\draw [fill] (0.8,0) circle (0.05) node [below] {$x$};
		\draw [fill] (0.8,1.6) circle (0.05) node [right] {$A_2$};
				\draw[thick] (0.4,-1) node {$N_2$};
		\end{tikzpicture}	}
		\caption{\label{fig:2cases} The two possible ways to add leaf~$x$ to~$N_1'$ and~$N_2'$ in the proof of Theorem~\ref{thm:retic} such that the obtained networks~$N_1$ and~$N_2$ have different numbers of reticulation vertices.}
	\end{figure}
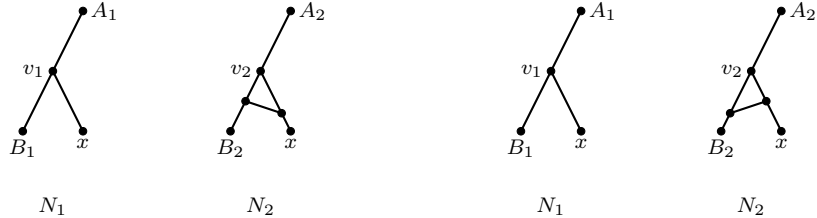
	
	The first case is that the parent of $x$ is not in a semi-cycle in~$N_1$ \rev{but is} the terminal of a semi-cycle in~$N_2$. First suppose that~$B_1\cap B_2 \neq\emptyset$. Then choose an arbitrary vertex~$y\in B_1\cap B_2$. Then $N_1|_{\{x,y\}}=T(x,y)$ while $N_2|_{\{x,y\}}=R(y;x)$, a contradiction. Hence, we may assume that $B_1\cap B_2 =\emptyset$. Then $B_1=A_2$ and $B_2=A_1$. Clearly, $B_1,B_2\neq\emptyset$. Take~$y\in B_1=A_2$ and~$z\in B_2=A_1$. Then $N_1|_{\{x,y\}}=T(x,y)$ and hence $N_2|_{\{x,y\}}=T(x,y)$, from which we can deduce that $N_2|_{\{z,y\}}=T(z,y)$. In addition, $N_2|_{\{z,x\}}=R(z;x)$ and hence $N_1|_{\{z,x\}}=R(z;x)$, from which we can deduce that $N_1|_{\{z,y\}}=R(z;y)$. This leads to a contradiction since $N_2|_{\{z,y\}}=T(z,y)$.
	
	The second case is that the parent of $x$ is not in a semi-cycle in~$N_1$ \rev{but is} the non-terminal non-source vertex of a semi-cycle in~$N_2$. First suppose that~$B_1\cap B_2 \neq\emptyset$. Then choose an arbitrary vertex~$y\in B_1\cap B_2$. Then $N_1|_{\{x,y\}}=T(x,y)$ while $N_2|_{\{x,y\}}=R(x;y)$, a contradiction. Hence, we may assume that $B_1\cap B_2 =\emptyset$. Then, as in the previous case, $B_1=A_2\neq\emptyset$ and $A_1=B_2\neq\emptyset$. Take~$y\in B_1=A_2$ and~$z\in B_2=A_1$. Then, similar to the previous case, $N_1|_{\{x,y\}}=T(x,y)$ and hence $N_2|_{\{x,y\}}=T(x,y)$, from which we can deduce that $N_2|_{\{z,y\}}=T(z,y)$. In addition, $N_2|_{\{z,x\}}=R(x;z)$ and hence $N_1|_{\{z,x\}}=R(x;z)$, from which we can deduce that $N_1|_{\{z,y\}}=R(y;z)$. This again leads to a contradiction since $N_2|_{\{z,y\}}=T(z,y)$.
\qed\end{proof}

\section{Complexity of Binet Compatibility}\label{sec:complexity}

A direct consequence of Theorem~\ref{thm:binet} is that there exists a simple polynomial-time algorithm to decide whether there exists a binary level-1 network displaying a given collection~$\bn$ of binary level-1 binets (see~\cite{himsw} for a related algorithm). In particular, a sink set of~$D(\bn)$ can be found in polynomial-time by computing the strongly connected components of~$D(\bn)$~\cite{tarjan1972depth} and checking for each of them whether it is a sink set. This can be used to find a typed split, if it exists. If such a split does not exist, then~$\bn$ is not compatible. Otherwise, we can try to construct networks for~$\bn|_A$ and~$\bn|_B$ recursively, and combine them as described in the proof of Theorem~\ref{thm:binet}. This algorithm is similar to the Aho algorithm for deciding whether a set of rooted trees can be displayed by some rooted tree~\cite{aho1981inferring}.

From Theorem~\ref{thm:binary}, it now follows that the following problem can also be solved in polynomial time.

\bigskip
\noindent
{\bf Binet Compatibility} (BC) \\
\noindent
{\bf Input:} a set $\bn$ of binary level-1 binets.\\
\noindent
{\bf Question:} is $\bn$ compatible, i.e., does there exist a binary network $N$ with $\bn\subseteq \bn(N)$?\\


We show now that the assumption that all binets in~$\bn$ are binary and level-1 is essential. Indeed, for general binets, the compatibility problem is at least as hard as the well-known graph isomorphism problem (GI)~\cite{GI1,GI2}, which is not known to be solvable in polynomial time. This is even true when the given binet set is thin (contains at most one binet for each pair of leaves).

\begin{theorem}\label{thm:gi}
Deciding whether there exists a phylogenetic network displaying a given thin set $\bn$ of binets is GI-hard.
\end{theorem}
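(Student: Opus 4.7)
The plan is to give a polynomial-time reduction from the Graph Isomorphism (GI) problem to binet compatibility. Given two graphs $G_1$ and $G_2$ on $n$ vertices each, I would construct, in polynomial time, a thin collection of binets $\bn$ on a common leaf set such that $\bn$ is compatible with some phylogenetic network if and only if $G_1 \cong G_2$. This would establish GI-hardness of the decision problem.

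The centerpiece of the reduction is a gadget that encodes a graph into a phylogenetic network. For each graph $G_i$, I would build a network $N_{G_i}$ whose structural skeleton faithfully records $G_i$. Crucially, because the statement concerns \emph{arbitrary} (not necessarily binary or level-1) binets, the gadget is free to use vertices of high in- and out-degree and many reticulations; this expressive power allows adjacency information to be embedded into the 2-leaf subnetworks in a way that no level-1 binet could capture. The thin collection $\bn$ produced by the reduction would then be $\bn(N_{G_1}) \cup \bn(N_{G_2})$, possibly augmented with additional bridge binets that link the two gadgets across the common leaf set and force a consistent alignment between them.

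The correctness argument proceeds in two directions. In the forward direction, if $\sigma\colon V(G_1) \to V(G_2)$ is a graph isomorphism, then $\sigma$ lifts to an amalgamation of $N_{G_1}$ and $N_{G_2}$ into a single network $N$ whose binet collection contains $\bn$, so $\bn$ is compatible. In the reverse direction, which is the harder part, I would argue that any network $N$ displaying $\bn$ must contain faithful images of the encoded structures of both $G_1$ and $G_2$, and the fact that these images coexist inside $N$ induces a well-defined bijection between $V(G_1)$ and $V(G_2)$ that preserves adjacency — i.e., an isomorphism. Polynomiality of the construction is straightforward, since the gadgets have size polynomial in $n$.

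The main obstacle is designing the gadget so that it is simultaneously \emph{rigid} and \emph{combinable}: rigid in the sense that the binet signature pins the graph structure down up to isomorphism, and combinable in the sense that isomorphic graphs yield a compatible combined binet collection. A natural way to achieve rigidity is to attach distinguishing sub-networks (unique "tags") to each vertex of the encoded graph, so that the binets between tagged leaves faithfully record adjacency. Verifying that no spurious, non-isomorphic alignment of the two gadgets can be displayed by a single network — effectively a uniqueness property for the gadget's embedding — is where most of the technical effort will lie.
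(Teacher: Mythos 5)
Your proposal is a plan rather than a proof: the entire content of the theorem sits in the step you defer to ``where most of the technical effort will lie.'' Concretely, you never specify the gadget, the bridge binets, or the common leaf set, and the reverse direction (any network displaying $\bn$ induces an isomorphism $G_1\cong G_2$) is asserted without an argument. As sketched, the construction also has unresolved pitfalls: if the two gadgets' leaf sets are essentially disjoint, then $\bn(N_{G_1})\cup\bn(N_{G_2})$ is always compatible (hang the two gadgets under a common root via the $T(\cdot,\cdot)$ operation), so the reduction collapses; if the leaf sets largely coincide, thinness already forces the per-pair binets of the two gadgets to agree, which presupposes the alignment you are trying to test. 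A further small gap: GI is usually stated for undirected graphs, while binets are rooted DAGs, so you need either an isomorphism-preserving encoding of undirected graphs into DAGs or, as the paper does, a reduction from DAG-isomorphism, which is known to be GI-complete.

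The paper's reduction needs none of this machinery and only three leaves. Given DAGs $G_1,G_2$, it builds a single binet $B_1$ on $\{x,y\}$ whose interior contains $G_1$ hung below a reticulation above $x$ (with $y$ attached off the cycle), an analogous binet $B_2$ on $\{x,z\}$ containing $G_2$, and the tree binet $B_3=T(y,z)$; the instance is $\bn=\{B_1,B_2,B_3\}$. The point is that ``displaying'' is exact: $N|_{\{x,y\}}=B_1$ forces $N$ to contain a reticulation $v$ with a verbatim copy of $G_1$ below it, $N|_{\{x,z\}}=B_2$ forces a reticulation $v'$ with a copy of $G_2$ below it, and $T(y,z)$ forces $v=v'$, so the two copies are the same subgraph and $G_1\cong G_2$; conversely, an isomorphism lets one splice $z$ into $B_1$ next to $y$ to display all three binets. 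So the rigidity you were hoping to engineer through tags and bridge binets is obtained for free by encoding each whole graph inside one binet and using a single $T(y,z)$ constraint to force the two encodings to coincide. Without an explicit gadget and a proof of your uniqueness-of-embedding claim, your argument does not yet establish GI-hardness.
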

\begin{proof}
We reduce from DAG-isomorphism, which is known to be GI-complete~\cite{GI2}. Let~$G_1,G_2$ be two directed acyclic graphs, which form an instance of the DAG-isomorphism problem. For~$i=1,2$, we add vertices~$\rho_i,u_i,v_i,w_i,r_i$, a new leaf labelled~$x$, an arc from~$w_i$ to each indegree-0 vertex of~$G_i$ and from each outdegree-0 vertex of~$G_i$ to~$r_i$ and arcs $(\rho_i,u_i)$, $(u_i,v_i)$, $(\rho_i,v_i)$, $(v_i,w_i)$ and~$(r_i,x)$. In~$G_1$, we add a new leaf labelled~$y$ and an arc~$(u_1,y)$. In~$G_2$, we add a new leaf labelled~$z$ and an arc~$(u_2,z)$. We have thus transformed~$G_1$ into a binet~$B_1$ and~$G_2$ into a binet~$B_2$. The third binet is~$B_3=T(y,z)$. \rev{See Figure~\ref{fig:reduction} for an illustration.}

\begin{figure}
\centering
 	\begin{tikzpicture}
	\begin{scope}[xshift=0cm,yshift=0cm]
		\draw [fill] (0,0) circle (0.05);
		\draw [fill] (.5,-.5) circle (0.05);
		\draw [fill] (0,-1) circle (0.05);
		\draw [fill] (1,-1) circle (0.05) node[below right] {$y$};
		\draw [fill] (0,-1.5) circle (0.05);
		\draw[thick] (0,0) -- (.5,-.5);
		\draw[thick] (0,0) -- (0,-1);
		\draw[thick] (.5,-.5) -- (0,-1);
		\draw[thick] (.5,-.5) -- (1,-1);
		\draw[thick] (0,-1) -- (0,-1.5);
		\draw[thick] (0,-1.5) -- (-0.5,-2);
		\draw[thick] (0,-1.5) -- (0,-2);
		\draw[thick] (0,-1.5) -- (0.5,-2);
		\draw (0,-2.25) node {$G_1$};
		\draw[thick] (0,-2.5) -- (0,-3);
		\draw[thick] (-.5,-2.5) -- (0,-3);
		\draw[thick] (.5,-2.5) -- (0,-3);
		\draw [fill] (0,-3) circle (0.05);
		\draw [fill] (0,-3.5) circle (0.05) node[below] {$x$};
		\draw[thick] (0,-3) -- (0,-3.5);
		\draw (0,-4.5) node {$B_1$};
	\end{scope}
		\begin{scope}[xshift=3cm,yshift=0cm]
		\draw [fill] (0,0) circle (0.05);
		\draw [fill] (.5,-.5) circle (0.05);
		\draw [fill] (0,-1) circle (0.05);
		\draw [fill] (1,-1) circle (0.05) node[below right] {$z$};
		\draw [fill] (0,-1.5) circle (0.05);
		\draw[thick] (0,0) -- (.5,-.5);
		\draw[thick] (0,0) -- (0,-1);
		\draw[thick] (.5,-.5) -- (0,-1);
		\draw[thick] (.5,-.5) -- (1,-1);
		\draw[thick] (0,-1) -- (0,-1.5);
		\draw[thick] (0,-1.5) -- (-0.5,-2);
		\draw[thick] (0,-1.5) -- (0,-2);
		\draw[thick] (0,-1.5) -- (0.5,-2);
		\draw (0,-2.25) node {$G_2$};
		\draw[thick] (0,-2.5) -- (0,-3);
		\draw[thick] (-.5,-2.5) -- (0,-3);
		\draw[thick] (.5,-2.5) -- (0,-3);
		\draw [fill] (0,-3) circle (0.05);
		\draw [fill] (0,-3.5) circle (0.05) node[below] {$x$};
		\draw[thick] (0,-3) -- (0,-3.5);
		\draw (0,-4.5) node {$B_2$};
	\end{scope}
		\begin{scope}[xshift=6cm,yshift=-1cm]
		\draw [fill] (0,0) circle (0.05);
		\draw [fill] (-.5,-.5) circle (0.05) node[below left] {$z$};
		\draw [fill] (.5,-.5) circle (0.05) node[below right] {$y$};
		\draw[thick] (0,0) -- (-.5,-.5);
		\draw[thick] (0,0) -- (.5,-.5);
		\draw (0,-2) node {$B_3$};
	\end{scope}
	\begin{scope}[xshift=9cm,yshift=0cm]
		\draw [fill] (0,0) circle (0.05);
		\draw [fill] (.5,-.5) circle (0.05);
		\draw [fill] (0,-1) circle (0.05);
		\draw [fill] (1,-1) circle (0.05);
		\draw [fill] (.5,-1.5) circle (0.05) node[below right] {$z$};
		\draw [fill] (1.5,-1.5) circle (0.05) node[below right] {$y$};
		\draw[thick] (1,-1) -- (1.5,-1.5);
		\draw[thick] (1,-1) -- (.5,-1.5);
		\draw [fill] (0,-1.5) circle (0.05);
		\draw[thick] (0,0) -- (.5,-.5);
		\draw[thick] (0,0) -- (0,-1);
		\draw[thick] (.5,-.5) -- (0,-1);
		\draw[thick] (.5,-.5) -- (1,-1);
		\draw[thick] (0,-1) -- (0,-1.5);
		\draw[thick] (0,-1.5) -- (-0.5,-2);
		\draw[thick] (0,-1.5) -- (0,-2);
		\draw[thick] (0,-1.5) -- (0.5,-2);
		\draw (0,-2.25) node {$G_1$};
		\draw[thick] (0,-2.5) -- (0,-3);
		\draw[thick] (-.5,-2.5) -- (0,-3);
		\draw[thick] (.5,-2.5) -- (0,-3);
		\draw [fill] (0,-3) circle (0.05);
		\draw [fill] (0,-3.5) circle (0.05) node[below] {$x$};
		\draw[thick] (0,-3) -- (0,-3.5);
		\draw (0,-4.5) node {$N$};
	\end{scope}
	\end{tikzpicture}
\caption{\rev{The three binets constructed in the proof of Theorem~\ref{thm:gi}, and a network~$N$ that displays all three binets if the digraphs $G_1$ and~$G_2$ are isomorphic.}}
\label{fig:reduction}
\end{figure}
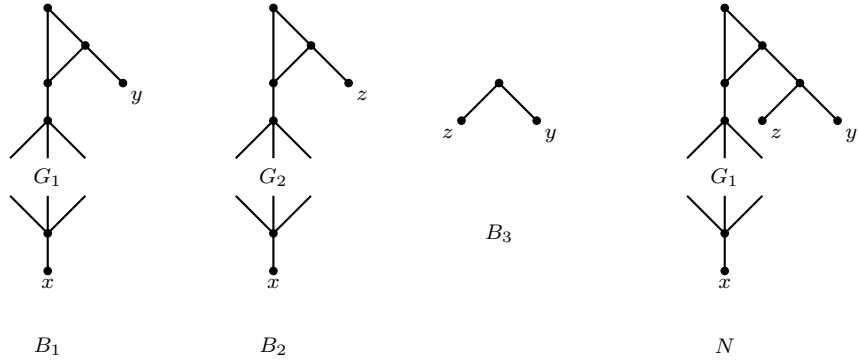

We claim that~$G_1$ and~$G_2$ are isomorphic if and only if there exists a network displaying~$B_1,B_2$ and~$B_3$.

First assume that~$G_1$ and~$G_2$ are isomorphic. Then we can construct a network displaying~$B_1,B_2$ and~$B_3$ as follows. Take~$B_1$ and subdivide the arc~$(u_1,y)$ by a new vertex~$u_1'$ and add leaf~$z$ with an arc~$(u_1',z)$. The obtained network clearly displays~$B_1$ and~$B_3$ and it also displays~$B_2$ since~$G_1$ and~$G_2$ are isomorphic.

Now assume that there exists some network~$N$ displaying~$B_1,B_2$ and~$B_3$. Then $N|_{\{x,y\}}=B_1$. Hence,~$N$ contains a cycle (in the underlying undirected graph) containing a reticulation~$v$, such that~$x$ and the image of~$G_1$ are below the arc leaving~$v$, while~$y$ is below some other arc leaving the cycle. Since~$N|_{\{y,z\}}=T(y,z)$, leaf~$z$ is not below~$v$ in~$N$. Therefore, deleting~$v$,~$x$ and the parent of~$x$ from the subgraph of~$N$ rooted at~$v$ gives~$G_1$.

Similarly,~$N$ contains a cycle containing a reticulation~$v'$, such that~$x$ and the image of~$G_2$ are below the arc leaving~$v'$, while~$z$ is below some other arc leaving the cycle. Since~$N|_{\{y,z\}}=T(y,z)$, leaf~$y$ is not below~$v'$ in~$N$. Therefore, deleting~$v'$,~$x$ and the parent of~$x$ from the subgraph of~$N$ rooted at~$v'$ gives~$G_2$.

Moreover,~$v=v'$ since~$N|_{\{y,z\}}=T(y,z)$. Hence,~$G_1$ and~$G_2$ are isomorphic.
 \qed\end{proof}

%

%
%


\section{Maximum Binet Compatibility}\label{sec:max}

If a collection of binets is not compatible, the question arises whether \rev{it is possible to find a largest compatible subset of the binets, in polynomial time. Here we show that this is unlikely to be the case.} The decision version of this problem is defined as follows.

\bigskip
\noindent
{\bf Maximum Binet Compatibility} (MBC)\\
\noindent
{\bf Input:} a set $\bn$ of binary level-1 binets and an integer $k$.\\
\noindent
{\bf Question:} does there exist a compatible subset $\bn'$ of $\bn$ with $|\bn'| \geq k$? \\



\medskip
We now establish the complexity of this problem (see Theorem~\ref{thm:kBC:hard}). \rev{Recall from Section~\ref{sec:retic} that~$s(C)$ and~$t(C)$ denote the source and terminal of a semi-cycle~$C$, respectively.}

\begin{lemma}
\label{lem:semi-cycle}
If the binet $R(x;y)$ is displayed by a binary level-1 network $N$, then $\lsa(x,y)$ is the source of a semi-cycle $C$ in $N$. In addition, $y$ is below $t(C)$ and $x$ is not below $t(C)$.    
\end{lemma}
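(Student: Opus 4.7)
The plan is to locate the reticulation appearing in $R(x;y)$ inside $N$, show that it is the terminal of a semi-cycle $C$ whose source is $\lsa(x,y)$, and then transfer the descent relations ``$y$ below reticulation'' and ``$x$ not below reticulation'' from the binet to~$N$. First I would identify the key vertices: by definition of the subnet, the root of $N|_{\{x,y\}}=R(x;y)$ is $\rho':=\lsa(x,y)$. The reticulation $r'$ of $R(x;y)$ has in-degree~$2$ in $N|_{\{x,y\}}$; since the subnet construction only suppresses vertices of in- and out-degree~$1$ together with any parallel arcs, the in-degree of every surviving vertex is preserved. Hence $r'$ already appears in $N$ with in-degree at least~$2$, and, as $N$ is binary, $r:=r'$ is a reticulation of $N$ with in-degree exactly~$2$.

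Next, the two internally-vertex-disjoint directed paths from $\rho'$ to $r'$ in $R(x;y)$ (the direct arc $\rho'\to r'$ and the two-arc path $\rho'\to t\to r'$ through the tree child $t$ of $\rho'$) lift to two directed paths $P_1, P_2$ from $\rho'$ to $r$ in $N$. These are internally vertex-disjoint in $N$ because each arc of $N|_{\{x,y\}}$ expands to a chain in $N$ of suppressed degree-$2$ vertices, and distinct subnet arcs correspond to disjoint chains. Therefore $\rho'$ and $r$ lie in a common non-trivial biconnected component of~$N$. Since $N$ is binary and level-$1$, this biconnected component is a semi-cycle~$C$, and its unique reticulation is $r=t(C)$.

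The main obstacle is verifying that $s(C)=\rho'$. Here I would use the following structural property of binary level-$1$ networks: every vertex of $C$ other than $s(C)$ and $t(C)$ is a tree vertex whose unique parent lies in $C$, while $t(C)$ itself has both parents in $C$. It follows that every directed path in $N$ from outside $C$ to any non-source vertex of $C$ must enter $C$ through $s(C)$. Consequently, if $\rho'\neq s(C)$, then both $P_1$ and $P_2$ would traverse $s(C)$ as an internal vertex, violating their internal-vertex-disjointness. Therefore $\rho'=s(C)$.

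Finally, the descent claims follow by transfer. Since $y$ lies below $r'$ in $R(x;y)$, there is a directed $r'\to y$ path in $N|_{\{x,y\}}$, which lifts to a directed $r\to y$ path in $N$, so $y$ is below $t(C)$. Conversely, if $x$ were below $r$ in $N$, then $r$ would lie on some directed path from $\rho'$ to $x$, and the $r\to x$ subpath would survive (up to suppressions) in $N|_{\{x,y\}}$, placing $x$ below $r'$ in $R(x;y)$, contradicting the structure of $R(x;y)$. Hence $x$ is not below $t(C)$, completing the proof.
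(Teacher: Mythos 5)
Your route is genuinely different from the paper's. The paper argues top--down from $u=\lsa(x,y)$: it first notes $u$ cannot be a reticulation (its child would be a lower stable ancestor), then that neither arc leaving $u$ can be a cut arc (otherwise $N|_{\{x,y\}}=T(x,y)$), so $u$ is the source of a semi-cycle $C$; finally it decides which leaf lies below $t(C)$ by eliminating the three alternatives ($T(x,y)$, both leaves below $t(C)$, and $R(y;x)$). You instead work bottom--up from the reticulation of the binet: you lift it and the two root-to-reticulation paths of $R(x;y)$ into $N$, identify the semi-cycle as the biconnected component of that reticulation, and then force $\lsa(x,y)$ to be its source. Most of this is sound: the lifting argument (distinct binet arcs correspond to chains of suppressed vertices with pairwise disjoint interiors, none of which are binet vertices, so the two paths lift to internally vertex-disjoint directed paths in $N$) can be made precise, and the transfer of the descent relations at the end is fine. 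One wording issue: ``the in-degree of every surviving vertex is preserved'' is false as stated, since vertex deletion and parallel-arc suppression can decrease in-degrees; what you actually need, and what does hold, is that the in-degree in the subnet is at most the in-degree in $N$, so the binet's reticulation is indeed a reticulation of $N$.

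The substantive gap is in the step establishing $\rho'=s(C)$. Your entry-point observation only tells you that a directed path starting \emph{outside} $C$ must pass through $s(C)$ before reaching any other vertex of $C$; hence the deduction ``if $\rho'\neq s(C)$ then both $P_1$ and $P_2$ traverse $s(C)$'' silently assumes $\rho'\notin C$. It does not cover the case that $\rho'$ is itself a non-source, non-terminal vertex of $C$, in which case neither path can traverse $s(C)$ at all, and the stated contradiction evaporates. That case must be excluded separately, e.g.: such a vertex has exactly one out-arc inside $C$, and a path that leaves $C$ through its cut arc can never re-enter $C$ (re-entry is only possible through $s(C)$, and reaching $s(C)$ from a vertex of $C$ would create a directed cycle); hence $P_1$ and $P_2$ would have to share their first arc inside $C$, whose head is a common internal vertex (it cannot be $r$, since $P_2$ contains $t\neq r$), contradicting internal disjointness. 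Together with the trivial observation that $\rho'\neq t(C)$, this closes the argument; without it, the proof as written is incomplete.
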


\begin{proof}
Let $u=\lsa(x,y)$. Note that $u$ is not a reticulation vertex, as otherwise the child of~$u$ would be a stable ancestor of $x$ and $y$ that is below~$u$. Hence,~$u$ has two children, denoted by~$u_1$ and~$u_2$.

Observe that neither $(u,u_1)$ nor $(u,u_2)$ is a cut arc, since otherwise we would have $N|_{\{x,y\}}=T(x,y)$, while by the assumption of the lemma $N|_{\{x,y\}}=R(x;y)$. Hence,~$u$ is the source of a semi-cycle~$C$.  Let $v:=t(C)$ be the terminal of~$C$. If neither $x$ nor $y$ is below $v$, then $N|_{\{x,y\}}=T(x,y)$, a contradiction. If both~$x$ and $y$ are below $v$, then $v$ is a stable ancestor of $x$ and $y$, a contradiction to $\lsa(x,y)=u$. Therefore, precisely one of~$x$ and~$y$ is below $v$. If $x$ is below $v$ and $y$ is not, then $N|_{\{x,y\}}=R(y;x)$, a contradiction. Therefore, $y$ is below $v$ and $x$ is not.
\qed\end{proof}

In view of the last lemma, for each binet $R(x;y)=N|_{\{x,y\}}$, there exists a unique semi-cycle $C_N(x;y)$ containing $\lsa(x,y)$.

\begin{lemma}
\label{lem:order}
If the two binets $R(x;y)$ and $R(y;z)$ are both displayed by a binary level-1 network $N$, then
$$
s(C_N(y;z)) \prec t(C_N(x;y)).
$$
\end{lemma}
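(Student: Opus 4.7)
The plan is to reduce the desired inequality to a comparability claim about stable ancestors of $y$, and then to rule out the wrong direction by positional analysis. First I would establish that $t(C_N(x;y))$ is a stable ancestor of $y$: because $N$ is binary and level-1, the biconnected component $C_N(x;y)$ has $t(C_N(x;y))$ as its unique reticulation, so the single arc leaving $t(C_N(x;y))$ is a cut arc, forcing every directed path from the root to $y$ (which by Lemma~\ref{lem:semi-cycle} lies below $t(C_N(x;y))$) to traverse it. Since $s(C_N(y;z)) = \lsa(y,z)$ is by definition also on every root-to-$y$ path, both $s(C_N(y;z))$ and $t(C_N(x;y))$ are stable ancestors of $y$. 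Using the totally ordered structure of stable ancestors of a fixed set (noted in the preliminaries), these two vertices are $\prec$-comparable; they cannot be equal, since Lemma~\ref{lem:semi-cycle} makes $s(C_N(y;z))$ a semi-cycle source (hence a tree vertex) whereas $t(C_N(x;y))$ is a reticulation.

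It then remains to exclude the case $t_1 \prec s_2$, writing $t_1 = t(C_N(x;y))$, $s_2 = s(C_N(y;z))$, and $t_2 = t(C_N(y;z))$. Suppose for contradiction that $t_1 \prec s_2$. My first observation is that $z \not\prec t_1$: otherwise, using the cut arc leaving $t_1$, the vertex $t_1$ would be a stable ancestor of $\{y,z\}$, forcing $\lsa(y,z) = s_2 \preceq t_1 \prec s_2$, impossible. Combined with the properties $y \not\prec t_2$ and $z \prec t_2$ delivered by Lemma~\ref{lem:semi-cycle} applied to $R(y;z)$, this immediately disposes of three of the four positional cases for $t_1$ versus $t_2$: the cases $t_1 = t_2$ and $t_1 \prec t_2$ would yield $y \prec t_2$ by transitivity from $y \prec t_1$, while $t_2 \prec t_1$ would yield $z \prec t_1$ by transitivity from $z \prec t_2$, all of which are contradictions.

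The last case, that $t_1$ and $t_2$ are $\prec$-incomparable, is the crux of the argument and the step I expect to be hardest. I would attack it by tracing a directed path in $N$ from $s_2$ down to $y$: it enters the semi-cycle $C_N(y;z)$ at its source $s_2$ and, since $y \not\prec t_2$, must leave $C_N(y;z)$ through the cut arc emanating from some intermediate (tree) vertex $w \neq t_2$ of the semi-cycle, forcing $t_1$ to lie strictly below this cut arc (one can rule out $t_1$ being on $C_N(y;z)$ itself because all its intermediate vertices are tree vertices and $t_1$ is a reticulation, while $t_1 = t_2$ is already excluded). The remaining step is to exploit the second internally-disjoint path of $C_N(y;z)$, which bypasses $w$ and reaches $z$ only through $t_2$, in order to exhibit a stable ancestor of $\{y,z\}$ strictly below $s_2$ and so contradict $s_2 = \lsa(y,z)$; I anticipate that Lemma~\ref{lem:lsa:two:nodes}, applied at a well-chosen internal vertex of the semi-cycle $C_N(y;z)$, will close this last gap.
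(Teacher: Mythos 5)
Your reduction to the comparability of $s_2:=s(C_N(y;z))$ and $t_1:=t(C_N(x;y))$ as stable ancestors of $y$, and your elimination of the subcases $z\prec t_1$, $t_1=t_2$, $t_1\prec t_2$ and $t_2\prec t_1$ (with $t_2:=t(C_N(y;z))$), are correct, and up to that point your argument is a finer-grained version of the paper's, which instead compares the two sources $s(C_N(x;y))$ and $s(C_N(y;z))$ as stable ancestors of $y$. But the case you leave open is a genuine gap, and it cannot be closed, because the configuration you arrive at there actually occurs: when all of $C_N(x;y)$ (and with it $x$ and $y$) hangs below the cut arc of a non-terminal vertex $w$ of $C_N(y;z)$ while $z$ hangs below $t_2$, there is no stable ancestor of $\{y,z\}$ strictly below $s_2$ --- root-to-$z$ paths may run along the side of $C_N(y;z)$ that avoids $w$, and no root-to-$y$ path meets $t_2$ --- so $\lsa(y,z)=s_2$ is perfectly consistent and the contradiction you hope to extract via Lemma~\ref{lem:lsa:two:nodes} does not exist. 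The leftmost trinet of Figure~\ref{fig:samebinets} is exactly such a network: it displays $R(x;y)$ and $R(y;z)$, the semi-cycle $C_N(x;y)$ is its lower $3$-cycle and $C_N(y;z)$ is its upper one (the root is $\lsa(y,z)$, since a root-to-$z$ path can bypass the vertex above the lower cycle), and hence $t(C_N(x;y))$, the parent of $y$, lies strictly below $s(C_N(y;z))$, the root --- the reverse of the inequality you are trying to prove. So no auxiliary lemma will finish your last case as stated.

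Note that the difficulty is not an artifact of your route: the paper's proof reaches the same fork and dismisses the branch $s(C_N(x;y))\prec s(C_N(y;z))$ by asserting that it forces $s(C_N(x;y))\prec t(C_N(y;z))$, which overlooks precisely the possibility that $s(C_N(x;y))$ sits below a non-terminal vertex of $C_N(y;z)$, as in the trinet above. What your case analysis does establish is the correct dichotomy: either $s_2\prec t_1$, or $x$ and $y$ both lie below the cut arc of a non-terminal vertex of $C_N(y;z)$. In either alternative one can check that $\lsa(x,z)$ is the source of a semi-cycle whose terminal has $z$, but not $x$, below it, so that $N$ also displays $R(x;z)$; this ``transitivity'' of the reticulate binet relation is what the application in Proposition~\ref{prop:digraph:binet} really needs, since around a directed cycle it would force both $R(x_1;x_m)$ and $R(x_m;x_1)$ to be displayed on the same pair of leaves, which is impossible. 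I would therefore recommend proving the dichotomy, or the transitivity statement directly, rather than the displayed inequality.
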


\begin{proof}
Let $C_1=C_N(x;y)$ and $C_2=C_N(y;z)$. By Lemma~\ref{lem:semi-cycle}, $y\prec t(C_1)$ but $y$ is not below $t(C_2)$, from which we know that $C_1\not =C_2$. Since $s(C_1)$ and $s(C_2)$ are stable ancestors of $y$ in view of Lemma~\ref{lem:semi-cycle}, 
we have either $s(C_1) \prec s(C_2)$ or $s(C_2)\prec s(C_1)$ but not both.

Note that if $s(C_1)\prec s(C_2)$, then $s(C_1)\prec t(C_2)$ and hence $y\prec s(C_1)\prec t(C_2)$, a contradiction. Thus  $s(C_2)\prec s(C_1)$, from which it follows that $s(C_2)\prec t(C_1)$.  
\qed\end{proof}

Given a digraph $G$, let $\mathcal{R}(G)$ be the collection of binets $\{R(x;y)\,|\,(x,y) \in E(G)\}$ \rev{induced by} $G$. \rev{Note that $\mathcal{R}(G)$ is a binet set on~$V(G)$, i.e., the leaves of the binets in $\mathcal{R}(G)$ correspond to the vertices of~$G$.}

\begin{proposition}
\label{prop:digraph:binet}
Let~$G$ be a digraph. Then $G$ is acyclic if and only if $\mathcal{R}(G)$ is compatible.
\end{proposition}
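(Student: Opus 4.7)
The key observation is that the digraph $D(\mathcal{R}(G))$ coincides with $G$: since $\mathcal{R}(G)$ contains only reticulate binets, $(x,y)$ is an arc in $D(\mathcal{R}(G))$ if and only if $R(x;y) \in \mathcal{R}(G)$, i.e.\ if and only if $(x,y) \in E(G)$.

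For the forward direction, assume $G$ is acyclic. After discarding any isolated vertices (they contribute no binets to $\mathcal{R}(G)$), I induct on $|V(G)|$, with trivial base cases $|V(G)| \le 2$. Since $G$ is acyclic, it has a sink $v$; set $B = \{v\}$ and $A = V(G) \setminus \{v\}$. Then $B$ is a sink set of $G$, so $A|B$ is a typed split of $D(\mathcal{R}(G)) = G$ (Type II when $v$ has incoming arcs, Type I otherwise). The induced subgraph $G[A]$ is acyclic, so by the inductive hypothesis $\mathcal{R}(G)|_A = \mathcal{R}(G[A])$ is displayed by a binary level-1 network, while $\mathcal{R}(G)|_B$ is empty. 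Applying Theorem~\ref{thm:binet} (interpreting $N_B$ as the single leaf $v$ in the appropriate $T$ or $R$ operation) yields a binary level-1 network displaying $\mathcal{R}(G)$, establishing compatibility.

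For the reverse direction, I argue the contrapositive. If $G$ contains a directed $2$-cycle $x \to y \to x$, then both $R(x;y)$ and $R(y;x)$ belong to $\mathcal{R}(G)$, so $\mathcal{R}(G)$ is not thin and hence not compatible. Otherwise, suppose $x_1 \to x_2 \to \cdots \to x_k \to x_1$ is a directed cycle of length $k \ge 3$, and assume for contradiction that $\mathcal{R}(G)$ is compatible. Compatibility forces thinness, so by Theorem~\ref{thm:binary} there is a binary level-1 network $N$ displaying $\mathcal{R}(G)$. For each $i$ (indices mod $k$), Lemma~\ref{lem:semi-cycle} supplies the semi-cycle $C_i := C_N(x_i;x_{i+1})$ with $s(C_i) = \lsa(x_i,x_{i+1})$, and Lemma~\ref{lem:order}, applied to the consecutive pair $R(x_i;x_{i+1}), R(x_{i+1};x_{i+2})$, yields $s(C_{i+1}) \prec t(C_i)$. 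Combined with $t(C_i) \prec s(C_i)$ (which holds for any semi-cycle), chaining around the cycle produces the strictly decreasing chain $s(C_1) \succ s(C_2) \succ \cdots \succ s(C_k) \succ s(C_1)$, the desired contradiction.

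The main conceptual point is the bridge from arbitrary binary networks to binary level-1 networks provided by Theorem~\ref{thm:binary}, without which Lemmas~\ref{lem:semi-cycle} and~\ref{lem:order} would not apply; a minor technical point is the degenerate singleton case in Theorem~\ref{thm:binet}, which is handled transparently by the $T$ and $R$ constructions.
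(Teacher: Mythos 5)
Your proof is correct, but your forward direction takes a different route from the paper's. For the ``acyclic $\Rightarrow$ compatible'' direction the paper does not recurse at all: it takes a topological ordering $x_1,\ldots,x_n$ of $G$ and exhibits one explicit level-1 network $N_*$ (Figure~\ref{fig:comb}) that displays every binet $R(x_i;x_j)$ with $i<j$, hence all of $\mathcal{R}(G)$. You instead peel off a sink $v$, observe that $V(G)\setminus\{v\}\,|\,\{v\}$ is a typed split of $D(\mathcal{R}(G))=G$, and invoke Theorem~\ref{thm:binet} inductively; unrolled, your recursion essentially rebuilds $N_*$, but the argument runs through the splitting machinery (mirroring the algorithm of Section~\ref{sec:complexity}) rather than a direct verification. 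The paper's construction buys explicitness and no degenerate cases (and $N_*$ is reused in Theorem~\ref{thm:questionmark}); your induction buys a conceptual link to Theorem~\ref{thm:binet}, at the cost of the edge cases you acknowledge --- the singleton part $B=\{v\}$, where ``$N_B$ = a single leaf'' is not literally a phylogenetic network under the paper's conventions (leaves are required to have in-degree one), and isolated vertices created when $v$ is removed; both are harmless but should be smoothed out, e.g.\ by noting that $R(N_A;N_B)$ with $N_B$ a bare leaf is just the construction hanging $v$ below the reticulation. Your reverse direction is essentially the paper's: compatibility plus thinness gives a binary level-1 network via Theorem~\ref{thm:binary}, and Lemmas~\ref{lem:semi-cycle} and~\ref{lem:order} are chained around the cycle; you close the contradiction with $s(C_1)\prec s(C_1)$ whereas the paper derives $x_1\prec t(C_1)$ against Lemma~\ref{lem:semi-cycle}, and you treat directed $2$-cycles explicitly via non-thinness, a case the paper's ``$m\geq 3$'' phrasing passes over silently --- a small point in your favour.
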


\begin{proof}
Let $n=|X|$, with~$X$ the vertex set of~$G$. 
Suppose first that $G$ is acyclic, then there exists a topological sorting of $G$, that is, the vertices of $G$ can be ordered as $x_1,\ldots ,x_n$ so that $(x_i,x_j) \in E(G)$ implies $i<j$. Hence, the network $N_*$ in Figure~\ref{fig:comb} displays~$\mathcal{R}(G)$ \rev{since~$N_*$ displays each binet~$R(x_i;x_j)$ with $i<j$}.

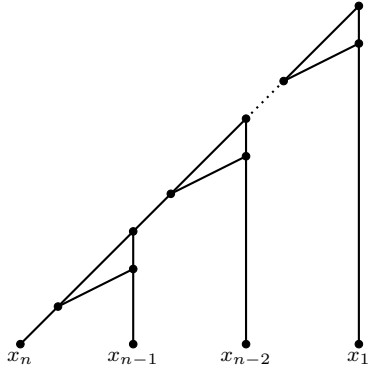
\begin{figure}
\centering
\begin{tikzpicture}

	\draw [fill] (1.5,1.5) circle (0.05);
	\draw [fill] (1.5,1) circle (0.05);
	\draw [fill] (0.5,0.5) circle (0.05);
	\draw[thick] (0.5,0.5) -- (1.5,1);
	\draw[thick] (0,0) -- (1.5,1.5);
	\draw[thick] (1.5,0) -- (1.5,1.5);
	\draw [fill] (0,0) circle (0.05) node [below] {$x_n$};
	\draw [fill] (1.5,0) circle (0.05) node [below] {$x_{n-1}$};
	\draw [fill] (3,0) circle (0.05) node [below] {$x_{n-2}$};
	\draw [fill] (3,3) circle (0.05);
	\draw [fill] (3,2.5) circle (0.05);
	\draw [fill] (2,2) circle (0.05);
	\draw[thick] (2,2) -- (3,2.5);
	\draw[thick] (1.5,1.5) -- (3,3);
	\draw[thick] (3,0) -- (3,3);
	\draw [fill] (4.5,0) circle (0.05) node [below] {$x_{1}$};
	\draw [fill] (4.5,4.5) circle (0.05);
	\draw [fill] (4.5,4) circle (0.05);
	\draw [fill] (3.5,3.5) circle (0.05);
	\draw[thick] (3.5,3.5) -- (4.5,4);
	\draw[thick,dotted] (3,3) -- (3.5,3.5);
	\draw[thick] (3.5,3.5) -- (4.5,4.5);
	\draw[thick] (4.5,0) -- (4.5,4.5);
	\end{tikzpicture}
\caption{A level-1 network $N_*$ on $X=\{x_1,\ldots,x_n\}$.}
\label{fig:comb}
\end{figure}
 
Conversely, suppose that $\mathcal{R}(G)$ is compatible. By Theorem~\ref{thm:binary},  there exists a binary level-1 network $N$ with $\mathcal{R}(G)\subseteq \bn(N)$. It remains to show that $G$ is acyclic. If not, then there exists a directed cycle $(x_1,x_2,\ldots,x_m)$ for some $m\geq 3$. Denote~$x_{m+1}=x_1$. In view of Lemma~\ref{lem:semi-cycle}, let $C_i=C_N(x_i;x_{i+1})$ be the semi-cycle in $N$ containing $\lsa(x_i,x_{i+1})$ for $1\leq i \leq m$. Then  Lemma~\ref{lem:semi-cycle} implies $x_1\prec s(C_m)$ and that $x_1$ is not below $t(C_1)$.
On the other hand, by Lemma~\ref{lem:order} we have 
$$
s(C_m) \prec t(C_{m-1}) \prec s(C_{m-1}) \prec \ldots \prec s(C_2) \prec t(C_{1}). 
$$
Together with $x_1\prec s(C_m)$, it follows that $x_1\prec t(C_1)$, a contradiction. 
\qed\end{proof}

A set of binets $\bn$ on $X$ is said to be {\em dense} if for each pair of distinct elements $x$ and $y$ in $X$, there exists precisely one binet on $\{x,y\}$ in $\bn$. \leo{Hence, a dense set of binets is always thin.}

\begin{theorem}
\label{thm:kBC:hard}
The problem MBC is NP-complete, even if the given set of binets is dense. 
\end{theorem}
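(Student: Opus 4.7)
The plan is to prove NP-completeness in two standard parts: membership in NP, and NP-hardness via a polynomial-time reduction from the Feedback Arc Set (FAS) problem on tournaments, which is known to be NP-hard. Membership in NP is immediate, since a certificate consisting of a subset $\bn'\subseteq\bn$ together with a binary network $N$ can be checked in polynomial time against the conditions $|\bn'|\geq k$ and $\bn'\subseteq\bn(N)$.

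For the reduction, given a tournament $G=(V,E)$ and an integer $\ell$, the plan is to produce the MBC instance $(\bn,k)$ defined by $\bn:=\mathcal{R}(G)$ and $k:=|E|-\ell$. Since $G$ is a tournament, for every pair $\{x,y\}$ of distinct vertices exactly one of $(x,y),(y,x)$ belongs to $E$, so $\bn$ contains exactly one binet on each such pair; in particular $\bn$ is dense. The key observation is that every subset $\bn'\subseteq\bn$ has the form $\mathcal{R}(G')$, where $G'$ is the subgraph of $G$ with arc-set $\{(x,y):R(x;y)\in\bn'\}$, and by Proposition~\ref{prop:digraph:binet}, $\mathcal{R}(G')$ is compatible if and only if $G'$ is acyclic.

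From this it follows that a compatible subset of $\bn$ of size at least $k=|E|-\ell$ exists if and only if $G$ admits an acyclic subgraph on at least $|E|-\ell$ arcs, which is equivalent to $G$ having a feedback arc set of size at most $\ell$. This completes the reduction. The main conceptual obstacle is ensuring that the produced binet set is dense while still corresponding cleanly to a FAS instance; restricting the reduction to tournaments resolves this at once, because every pair of vertices already carries a unique arc and no padding with tree-type binets is needed. Beyond density, the correctness of the reduction is handed to us essentially verbatim by Proposition~\ref{prop:digraph:binet}, so no further combinatorial analysis is required.
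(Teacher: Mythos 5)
Your proposal is correct and follows essentially the same route as the paper: a reduction from Feedback Arc Set in Tournaments using $\bn=\mathcal{R}(G)$ and $k=|E|-\ell$, with correctness resting entirely on Proposition~\ref{prop:digraph:binet} and density following from $G$ being a tournament. The only addition is your explicit NP-membership argument (which the paper leaves implicit); just note that the certificate network can be taken to be the polynomial-size binary level-1 network guaranteed by Theorems~\ref{thm:binet} and~\ref{thm:binary}, so the certificate is indeed of polynomial size.
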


\begin{proof}
We reduce from the NP-hard problem Feedback Arc Set in Tournaments (FAST)~\cite{alon2006ranking,charbit2007minimum}, which is defined as follows. Given a tournament, i.e. a digraph $G=(V,E)$ with either $(a,b)\in E$ or $(b,a)\in E$ (but not both) for each pair of distinct elements $a$ and $b$ in $V$, and given a positive integer $k'$, does there exist a subset $F\subseteq E$ of at most $k'$ arcs whose removal makes~$G$ acyclic. If \rev{such an arc} set exists, then we call it a {\em feedback arc set} of $G$.

The reduction is as follows. For each instance $(G,k')$ of FAST, consider the corresponding instance $(\mathcal{R}(G),k)$ of MBC with $k=|\mathcal{R}(G)|-k'$. Since the set $\mathcal{R}(G)$ of binets induced by $G$ can be constructed in polynomial time, it suffices to show that $G$ contains a feedback arc set with size at most $k'$ if and only if there exists a compatible subset of $\mathcal{R}(G)$ of size at least~$k$. 

First assume that there exists a feedback arc set $E'$ of $G$ with size at most $k'$. That is, $|E'|\leq k'$, and the digraph $G^*$ obtained from $G$ by deleting the arcs in $E'$ is acyclic. Consider the set of binets $\bn'=\{R(x;y)\,:\,(x,y)\in E\setminus E'\}$. This set contains at least~$k$ binets. In addition, since $\bn'=\mathcal{R}(G^*)$, it follows by Proposition~\ref{prop:digraph:binet} that $\bn'$ is compatible. 

Now assume that there exists a compatible binet set $\bn'\subseteq \mathcal{R}(G)$ with $|\bn'|\geq k$. Consider the set $E'=\{(x,y)\,:\, R(x,y)\in \mathcal{R}(G)\setminus \bn'\}$ of arcs of $G$. Then by Proposition~\ref{prop:digraph:binet}, it follows that $E'$ is a feedback arc set. Moreover, $|E'|\leq k'$, which completes the proof. \qed\end{proof}


We complete the section by showing that there exists a polynomial time $1/3$-approximation algorithm for the MBC problem, which follows directly from the next theorem and its proof.

\begin{theorem}\label{thm:questionmark}
Suppose that $\bn$ is a set of binary level-1 binets on $X$. Then there exists a binary level-1 network $N$ such that $|\bn(N)\cap \bn|\geq |\bn|/3$.
\end{theorem}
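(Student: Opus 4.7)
The plan is to construct three specific binary level-1 networks on $X$ whose binet collections together cover every binary level-1 binet on every pair from $X$ exactly once, and then use a pigeonhole argument.

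Fix an arbitrary ordering $x_1, x_2, \ldots, x_n$ of $X$. I would consider the following three networks. Let $N_T$ be any caterpillar binary tree on $X$, so $N_T$ displays $T(x_i, x_j)$ for every pair $\{x_i, x_j\}$. Let $N_1$ be the level-1 network $N_*$ of Figure~\ref{fig:comb} built on the ordering $x_1, \ldots, x_n$, and let $N_2$ be the same network built on the reversed ordering $x_n, \ldots, x_1$. All three are binary and level-1 by construction. Applying the forward direction of Proposition~\ref{prop:digraph:binet} to the transitive tournament on $\{x_1, \ldots, x_n\}$ with arcs $(x_i, x_j)$ for all $i < j$ shows that $N_1$ displays the binet $R(x_i; x_j)$ whenever $i < j$; the symmetric argument with the reversed ordering shows that $N_2$ displays $R(x_j; x_i)$ whenever $i < j$.

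For each pair $\{x_i, x_j\}$ with $i < j$ there are exactly three possible binary level-1 binet topologies, namely $T(x_i, x_j)$, $R(x_i; x_j)$ and $R(x_j; x_i)$ (see Figure~\ref{fig:allBinets}), and by the above each of them is displayed by exactly one of $N_T$, $N_1$, $N_2$. Consequently, every binet $B \in \bn$ is displayed by exactly one of the three networks, so
\[
|\bn(N_T) \cap \bn| + |\bn(N_1) \cap \bn| + |\bn(N_2) \cap \bn| = |\bn|.
\]
Hence at least one of the three networks displays at least $|\bn|/3$ of the binets in $\bn$, and taking $N$ to be that network proves the theorem.

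The argument is almost mechanical once the three networks are in place; the only piece that requires any verification is the orientation claim that $N_1$ displays $R(x_i; x_j)$ rather than $R(x_j; x_i)$ for $i < j$, which is already the content invoked inside the proof of Proposition~\ref{prop:digraph:binet}. As a bonus, the three networks can be built and compared against $\bn$ in polynomial time, so selecting whichever of them maximises the intersection yields a polynomial-time $1/3$-approximation algorithm for MBC.
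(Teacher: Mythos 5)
Your proof is correct and takes essentially the same approach as the paper: the paper likewise splits the binets into the three classes $\{T(x_i,x_j)\}$, $\{R(x_i;x_j):i<j\}$, $\{R(x_j;x_i):i<j\}$ for an arbitrary ordering, takes the largest class, and displays it with either a binary tree or the network $N_*$ of Figure~\ref{fig:comb} (handling your explicit reversed copy $N_2$ by a without-loss-of-generality relabelling instead). Your three-network pigeonhole is just a repackaging of that majority argument, relying on the same key fact that $N_*$ displays every $R(x_i;x_j)$ with $i<j$, and it yields the same polynomial-time $1/3$-approximation.
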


\begin{proof}
If at least a third of the binets in $\bn$ are tree type, then take $N$ to be any \rev{binary} tree on $X$ and we are done. Hence we may assume that at least two thirds of the binets are reticulate type. 

Impose an arbitrary ordering on the elements in $X$, that is, write $X=\{x_1,\ldots,x_n\}$. Let $\bn_1=\bn\cap \{R(x_i;x_j)\,:\,1\leq i<j\leq n\}$ and $\bn_2=\bn \cap \{R(x_j;x_i)\,:\,1\leq i<j\leq n\}$. Without loss of generality, we may assume that $|\bn_1|\geq |\bn_2|$ (as the other case can be established in a similar way). Since at least two thirds of the binets are reticulate type, and each of those is contained in either $\bn_1$ or $\bn_2$ (but not both), we know that $|\bn_1|\geq |\bn|/3$.
Now consider the network $N_*$ in Figure~\ref{fig:comb}, then clearly we have $\bn_1\subseteq \bn(N_*)$. Thus we have 
$|\bn(N_*)\cap \bn|\geq |\bn_1| \geq |\bn|/3$, from which the theorem follows. 
\qed\end{proof}


\section{Discussion}\label{sec:discussion}

In this paper we have developed some combinatorial results concerning collections of level-1 binets.  Several interesting questions arise from these results. For example, we have shown that the collection of level-1 binets displayed \rev{by} a binary phylogenetic network can be displayed by some level-1 network, but is there some canonical level-1 network that could be used to display such a collection? In addition, can we count the number of binary level-1 networks that display a dense compatible collection of binets? We have also seen that the collection of binets displayed by a binary level-1 network determine its reticulation number. Therefore it is natural to ask which properties of a phylogenetic network in general are determined by its binets?

We have also studied some algorithmic questions concerning binets. Concerning the maximum binet compatibilty problem, note that the constant $1/3$ is sharp in Theorem~\ref{thm:questionmark}. For example, consider the binet collection $\{R(x;y),T(x,y),R(y;x)\}$. However, can a better bound be achieved by restricting to thin collections of binets, and can improved approximation algorithms also be found?

In another direction, it would be interesting to know whether similar results to those proven in this paper might hold for higher level networks. For example, what can be said about properties of collections of level-2 binets, and does Theorem~\ref{thm:retic} hold also for higher level networks? Also, we could try to generalize some of our results to $k$-nets, i.e. networks on $k$ leaves, $k \ge 2$. For example, does Theorem~\ref{thm:binary} hold for trinets? In general, it would be interesting to know what additional information the collection of $k$-nets displayed by a network might contain for $k\ge 3$. Note that it has been shown that trinets do not completely determine rooted networks in general~\cite{huber2015much}. However, do they determine properties of networks such as the number of reticulations?
 
Similarly, it would be interesting to extend some of our algorithmic results to higher-level networks and $k$-nets. For example, it is known that the \rev{compatibility} problem is NP-complete for collections of level-1 trinets \cite{himsw}. However, to date the maximum trinet \rev{compatibility} problem has not been studied.


\rev{Eventually, it is hoped that new results in these directions could be useful for developing novel methods to construct phylogenetic networks from higher-level networks and k-nets. For example, using our results it may be possible to develop approaches to build a consensus network for a collection of phylogenetic trees or networks. Note that consensus networks have already proven themselves useful in the unrooted setting, where they are used to summarize key features displayed by a collection of trees or networks (see e.g. \cite{holland2004consensus}).  A consensus method based on binets could work by breaking each of the given networks down into a collection of binets, and then developing methods to pool together the information contained in
the resulting binets so as to construct some consensus network, or at least some constraints that any such network should satisfy. Note that similar approaches have been developed to build consensus trees for a collection of phylogenetic trees by breaking each of the trees down into a collection of triplets (see e.g. \cite[Section 2]{bryant2003classification}). Probably it would be of some interest to first consider how to construct a level-1 consensus network for a collection of level-1 networks by breaking each of them down into level-1 binets. This is already likely to be quite challenging in view of our result concerning NP-completeness of Maximum Binet Compatibility.}


\bibliographystyle{spmpsci}      
\bibliography{phynet}   

\begin{thebibliography}{10}
\providecommand{\url}[1]{{#1}}
\providecommand{\urlprefix}{URL }
\expandafter\ifx\csname urlstyle\endcsname\relax
  \providecommand{\doi}[1]{DOI~\discretionary{}{}{}#1}\else
  \providecommand{\doi}{DOI~\discretionary{}{}{}\begingroup
  \urlstyle{rm}\Url}\fi

\bibitem{aho1981inferring}
Aho, A.V., Sagiv, Y., Szymanski, T.G., Ullman, J.D.: Inferring a tree from
  lowest common ancestors with an application to the optimization of relational
  expressions.
\newblock SIAM Journal on Computing \textbf{10}(3), 405--421 (1981)

\bibitem{alon2006ranking}
Alon, N.: Ranking tournaments.
\newblock SIAM Journal on Discrete Mathematics \textbf{20}(1), 137--142 (2006)

\bibitem{bapteste2013networks}
Bapteste, E., van Iersel, L.J.J., Janke, A., Kelchner, S., Kelk, S., McInerney,
  J.O., Morrison, D.A., Nakhleh, L., Steel, M., Stougie, L., Whitfield, J.:
  Networks: expanding evolutionary thinking.
\newblock Trends in Genetics \textbf{29}(8), 439--441 (2013)

\bibitem{bryant2003classification}
Bryant, D.: A classification of consensus methods for phylogenetics.
\newblock DIMACS series in discrete mathematics and theoretical computer
  science \textbf{61}, 163--184 (2003)

\bibitem{byrka2010new}
Byrka, J., Guillemot, S., Jansson, J.: New results on optimizing rooted
  triplets consistency.
\newblock Discrete Applied Mathematics \textbf{158}(11), 1136--1147 (2010)

\bibitem{charbit2007minimum}
Charbit, P., Thomass{\'e}, S., Yeo, A.: The minimum feedback arc set problem is
  {NP}-hard for tournaments.
\newblock Combinatorics, Probability and Computing \textbf{16}(01), 1--4 (2007)

\bibitem{felsenstein2004inferring}
Felsenstein, J.: Inferring phylogenies.
\newblock Sinauer Associates Sunderland (2004)

\bibitem{GI1}
Goldberg, M.: The graph isomorphism problem.
\newblock In: J.L. Gross, J.~Yellen (eds.) Handbook of Graph Theory, pp.
  68--78. CRC Press (2003)

\bibitem{gus14}
Gusfield, D.: ReCombinatorics: The Algorithmics of Ancestral Recombination
  Graphs and Explicit Phylogenetic Networks.
\newblock MIT Press (2014)

\bibitem{holland2004consensus}
Holland, B., Huber, K., Moulton, V., Lockhart, P.: Using consensus networks to
  visualize contradictory evidence for species phylogeny.
\newblock Molecular Biology and Evolution \textbf{21}(7), 1459--1461 (2004)

\bibitem{himsw}
Huber, K., van Iersel, L.J.J., Moulton, V., Scornavacca, C., Wu, T.:
  Reconstructing phylogenetic level-1 networks from nondense binet and trinet
  sets.
\newblock Algorithmica  (2015).
\newblock \doi{10.1007/s00453-015-0069-8}

\bibitem{huber2015much}
Huber, K., van Iersel, L.J.J., Moulton, V., Wu, T.: How much information is
  needed to infer reticulate evolutionary histories?
\newblock Systematic biology \textbf{64}, 102--111 (2015)

\bibitem{hmw16}
Huber, K., Moulton, V., Wu, T.: Closed sets in phylogenetic networks.
\newblock Preprint  (2016)

\bibitem{hrs}
Huson, D., Rupp, R., Scornavacca, C.: Phylogenetic Networks: Concepts,
  Algorithms and Applications.
\newblock Cambridge University Press (2010)

\bibitem{M11}
Morrison, D.: An introduction to phylogenetic networks.
\newblock RJR Productions (2011)

\bibitem{oldman2016trilonet}
Oldman, J., Wu, T., van Iersel, L.J.J., Moulton, V.: Trilonet: Piecing together
  small networks to reconstruct reticulate evolutionary histories.
\newblock Molecular biology and evolution \textbf{33}(8), 2151--2162 (2016)

\bibitem{tarjan1972depth}
Tarjan, R.: Depth-first search and linear graph algorithms.
\newblock SIAM journal on computing \textbf{1}(2), 146--160 (1972)

\bibitem{GI2}
Zemlyachenko, V.N., Korneenko, N.M., Tyshkevich, R.I.: Graph isomorphism
  problem.
\newblock Journal of Soviet Mathematics \textbf{29}(4), 1426--1481 (1985)

\end{thebibliography}


\end{document}